\newtheorem{thm}{Theorem}
\theoremstyle{remark}
\newcounter{MYtempeqncnt}
\begin{document}
%
\title{Absorbing Markov Chain-Based Analysis of Age of Information in Discrete-Time Dual-Queue Systems}

\author{{Yifan Feng, \IEEEmembership{Graduate Student Member, IEEE,} Nail Akar, \IEEEmembership{Senior Member, IEEE,}\\ Zhengchuan Chen, \IEEEmembership{Senior Member, IEEE,} Mehul Motani, \IEEEmembership{Fellow, IEEE}
}

\thanks{Yifan Feng is with the School of Microelectronics and Communication Engineering, Chongqing University, Chongqing 400044, China, and also with the Department of Electrical and Computer Engineering, National University of Singapore, Singapore 117583 (e-mail: fengyf@cqu.edu.cn).}
\thanks{Nail Akar is with the Electrical and Electronics Engineering Department, Bilkent University, Bilkent 06800, Ankara, Turkey (e-mail: akar@ee.bilkent.edu.tr).}
\thanks{Zhengchuan Chen is with the School of Microelectronics and Communication Engineering, Chongqing University, Chongqing 400044, China (e-mail: czc@cqu.edu.cn).}
\thanks{Mehul Motani is with the Department of Electrical and Computer Engineering, the Institute of Data Science, the N.1 Institute for Health, and the Institute for Digital Medicine,
National University of Singapore, Singapore 117583 (e-mail: motani@nus.edu.sg).}
}

\maketitle

\begin{abstract}
Status update systems require the timely collection of sensing information for which deploying multiple sensors/servers to obtain diversity gains is considered as a promising solution.
In this work, we construct an absorbing Markov chain (AMC) to exactly model Age of Information (AoI) in a discrete-time dual-queue (DTDQ) status update system with generate at will (GAW) status updates, discrete phase-type (DPH-type) distributed service times and transmission freezing.
Specifically, transmission is frozen for a certain number of slots following the initiation of a transmission, after which one of the two servers is allowed to simultaneously sample the monitored physical process and transmit a status update packet, according to the availabilities and priorities of the two servers.
Based on the discrete-time AMC, we provide the exact distributions of both AoI and peak AoI (PAoI), enabling the derivation of arbitrary order moments.
In addition, we analytically study the role of freezing using several typical service time distributions, including geometric, uniform, and triangular distributions.
The introduction of freezing for DTDQ systems is demonstrated to be significantly beneficial in reducing the mean AoI for various service time distributions.
Additionally, we study the impact of the statistical parameters of the service times and heterogeneity between the two servers on the freezing gain, i.e., reduction in mean AoI attained with optimum freezing policies.
\end{abstract}

\begin{IEEEkeywords}
Age of information, absorbing Markov chains, transmission freezing, phase-type distribution, discrete-time status update system.
\end{IEEEkeywords}

\section{Introduction}
Driven by the demand for convenience and intelligence in vertical industries and daily life, Internet of Things (IoT) applications such as industrial automation, smart homes, and autonomous driving have attracted widespread attention \cite{Survey1,Survey2,Survey3}.
In particular, efficient operation of intelligent devices depends on real-time sensing of relevant environmental and event information \cite{Sensing}. By modeling the acquisition of required information as sampling and transmission of physical processes, the analysis and enhancement of information timeliness for status update systems has become a promising research field in recent years \cite{AOI1}.

In a status update system, focusing on the freshness of the information monitored at the receiving side is more beneficial than focusing on end-to-end latency, for the purpose of ensuring timeliness of data packets for application tasks (such as deciding the next action of a device).
Age of information (AoI) is the most widely known and adopted metric for measuring freshness, defined as the time elapsed since the generation instant of the freshest packet received by the monitor/controller \cite{OrignalAoI}.
Peak AoI (PAoI) is the maximum value reached by AoI before a successful update, which can be essential if the information freshness is required to stay below a given threshold at all times \cite{PAOI1}.
Some variants of AoI, e.g., age of incorrect information (AoII) and value of information (VoI), further consider the dynamic changes in the monitored physical process, aiming to characterize the significance of status updates, in addition to timeliness \cite{AOII,VOI1,VOI2}. AoII- and VoI-related works generally use Markov processes to model the physical processes that serve as information sources.
The authors of \cite{AOII} define AoII as the time elapsed since the source state known to the monitor becomes inconsistent with the actual source state.
For VoI, there are currently at least two definitions.
In \cite{VOI1}, the mutual information between the unobserved real-time state of the source and a number of measurements known to the monitor is termed as VoI,
whereas in \cite{VOI2}, VoI is quantified as the difference between the benefit gained by the monitor receiving a status update packet and the cost of the sensor transmitting that packet.
For information sources modeled by Wiener and Ornstein-Uhlenbeck (OU) processes, mean square error (MSE) is often used as a metric to characterize the significance of state updates, which can be expressed as a function of AoI \cite{MSE1,MSE2,MSE3}.
With the ability to assess the contribution of information to application tasks, AoI and its variants are proposed to be used for goal-oriented semantic communications \cite{AgeIsSem1,AgeIsSem2,AgeIsSem3}.
In particular, despite the existence of many variations of AoI with different goals and complexities, the original AoI as defined in \cite{OrignalAoI} still remains highly relevant and attractive for researchers due to its simplicity, universality, and its insensitivity to the dynamics of the source process.

There are numerous publications in the existing literature focusing on AoI, in which sampling and transmission of the monitored process are modeled as status update packets arriving at the server and serving of these packets, respectively.
Early studies mostly investigated the average AoI and average PAoI of single-source single-server systems, considering different packet arrival distributions, service time distributions, and queuing disciplines.
The authors of \cite{FCFS} derived the average AoI of the M/M/1, M/D/1, and D/M/1 queues under the first-come first-served (FCFS) discipline and further presented the average AoI of the M/M/1/1 queue under the last-come first-served (LCFS) discipline in \cite{LCFS}.
It was shown that the use of LCFS can significantly reduce the average AoI, indicating the advantage of bufferless schemes over infinite buffer schemes in terms of timeliness.
By allowing a buffered packet to be replaced by a newly arriving packet, the authors of \cite{Replace} studied the average AoI of the M/M/1/2 queue and a variant called the M/M/1/2$^*$ queue.
For arbitrary packet arrivals and service time distributions, the AoI distribution of the G/G/1 queue is provided in \cite{General1}, while a general expression of average AoI is derived for the G/G/1/1 queue in \cite{General2}.
The authors of \cite{PH1} obtained the exact distributions of AoI and PAoI for the PH/PH/1/1 and M/PH/1/2 queues through a numerical algorithm with the use of phase-type (PH-type) distributions that can approximate arbitrary distributions accurately, due to their denseness property \cite{neuts81},\cite{ocinneide}. 

Compared with single-source single-server systems, multi-source single-server and single-source multi-server systems are more complex and challenging to analyze, with the former offering device multiplexing benefits, and the latter achieving diversity gains.
Based on stochastic hybrid systems (SHS), the authors of \cite{SHS1} provided a method to derive the average AoI for multi-source single-server systems.
The key point of the SHS method is to find the stationary distribution of a finite-state Markov chain.
In a system consisting of two sources and a server, the SHS method is adopted to obtain the average AoI of each source under three packet management policies \cite{SHS2}.
The SHS method is also introduced for the analysis of multi-source multi-server scenarios by \cite{SHS3}, in which the authors derived a closed-form expression of average AoI for the case consisting of a single source and several homogeneous servers.
The authors of \cite{Mmt1} further extend the SHS method to obtain the moment generating function (MGF) of AoI for a system where a source transmits packets to a monitor via a network.
In \cite{Mmt2}, the MGFs of AoI and PAoI for a multi-source single-server system are derived under both preemptive and non-preemptive policies, based on which the average AoI and PAoI of the two-source case are also provided.
Furthermore, a flow graph-based analysis method is proposed in \cite{Flow1} to derive the average AoI and PAoI of a system with a single source and two servers (i.e., dual-queue system), which has been extended to discrete-time from continuous-time scenarios later \cite{Flow2}.
Recently, the authors of \cite{AMC1} developed a method making use of an absorbing Markov chain (AMC) to obtain the exact distributions of AoI and PAoI for multi-source single-server systems.
Based on the distributions, higher order moments of AoI and PAoI can also be obtained.

Of the aforementioned references, many adopt a random arrival (RA) model, where the generation of status updates is not tied to server availability.
The RA model is suitable when packet generation is sporadic and uncontrollable, but its lack of awareness of server availability leads to waste of resources and queue congestion (in cases of large buffers), especially when the packet arrival rate is high.
A typical example of non-RA models is the zero-wait (ZW) model \cite{Flow1}, where a newly generated status update packet is immediately served once the server completes a transmission.
However, despite its conciseness, the ZW model has been shown to deviate from AoI-optimality for single-queue systems in various cases in \cite{NZW1}, where necessary and sufficient conditions for ZW optimality are provided.
For continuous-time dual-queue systems, the authors of \cite{AMC2} proposed a freezing-based scheme in which the source is frozen for a certain period of time after generating a status update packet.

In this paper, we extend the AMC method to discrete-time scenarios, investigating the distribution of AoI for discrete-time dual-queue (DTDQ) systems when freezing is applied for a deterministic duration of $k$ slots.
The most relevant work is \cite{AMC2}, where a continuous-time AMC is constructed to obtain the distributions of AoI and PAoI under the ZW policy with PH-type service times, and the transmission freezing policy for exponentially distributed service times only. Thus, the analytical modeling of freezing with more general service times and the proper choice of the freezing parameter in such scenarios, are lacking in the literature, which is the scope of the current paper in a discrete-time setting. 
In addition to the inherent distinctions between discrete-time and continuous-time systems, the AMC in this work is more challenging to construct and analyze due to the large size of the corresponding transition matrix of the AMC resulting from the joint consideration of discrete PH-type (DPH-type) distributed service times and freezing.
The main contributions of this work are summarized as follows.
\begin{enumerate}[nosep,wide,labelwidth=!, labelindent=0pt,leftmargin=*]
\item
We construct a discrete-time AMC to characterize the AoI evolution of DTDQ systems with freezing whereas
very general DPH-type distributions are used to model the random service times of the two servers.
To the best of our knowledge, the setting of the paper is more general than the ones in 
the existing literature for dual queue systems. 
This general system model allows us to study the optimum freezing parameter in terms of AoI, the impact of certain statistical parameters of the service times on freezing gain, and the role of heterogeneity between the two servers.
The general system model degenerates to the ZW model if the freezing parameter is set to zero, whereas DPH-type distributions can approximate arbitrary discrete distributions thanks to the denseness property.
\item
Based on the constructed discrete-time AMC, we derive the exact distributions of AoI and PAoI for the considered DTDQ system in matrix geometric form.
By utilizing the distributions, arbitrary order moments of AoI and PAoI can be obtained using matrix-vector operations.
 \item
Using geometrically distributed and uniformly distributed service times as examples, we compare the analytical model with simulations, in terms of average AoI, average squared AoI, and average PAoI.
The strong consistency between the two sets of results demonstrates the validity of the theoretical analysis.	
\item
Through several numerical examples, we study the optimum freezing parameter in terms of mean AoI, and we have shown that for certain parameters of the service times, the use of the optimum freezing parameter can yield a considerable reduction in average AoI.
\end{enumerate}
The rest of this paper is organized as follows.
We describe the system model in Section~\ref{system_model}.
The construction of discrete-time AMC is presented in Section~\ref{AMCconstruct}.
Exact distributions of AoI and PAoI for the investigated DTDQ system are derived in 
Section~\ref{AoIderivation}.
Numerical results are presented in Section~\ref{Results}.
Finally, conclusions are given in Section~\ref{Conclusion}.
\begin{figure}[t]
\centering
\includegraphics[width=0.47\textwidth]{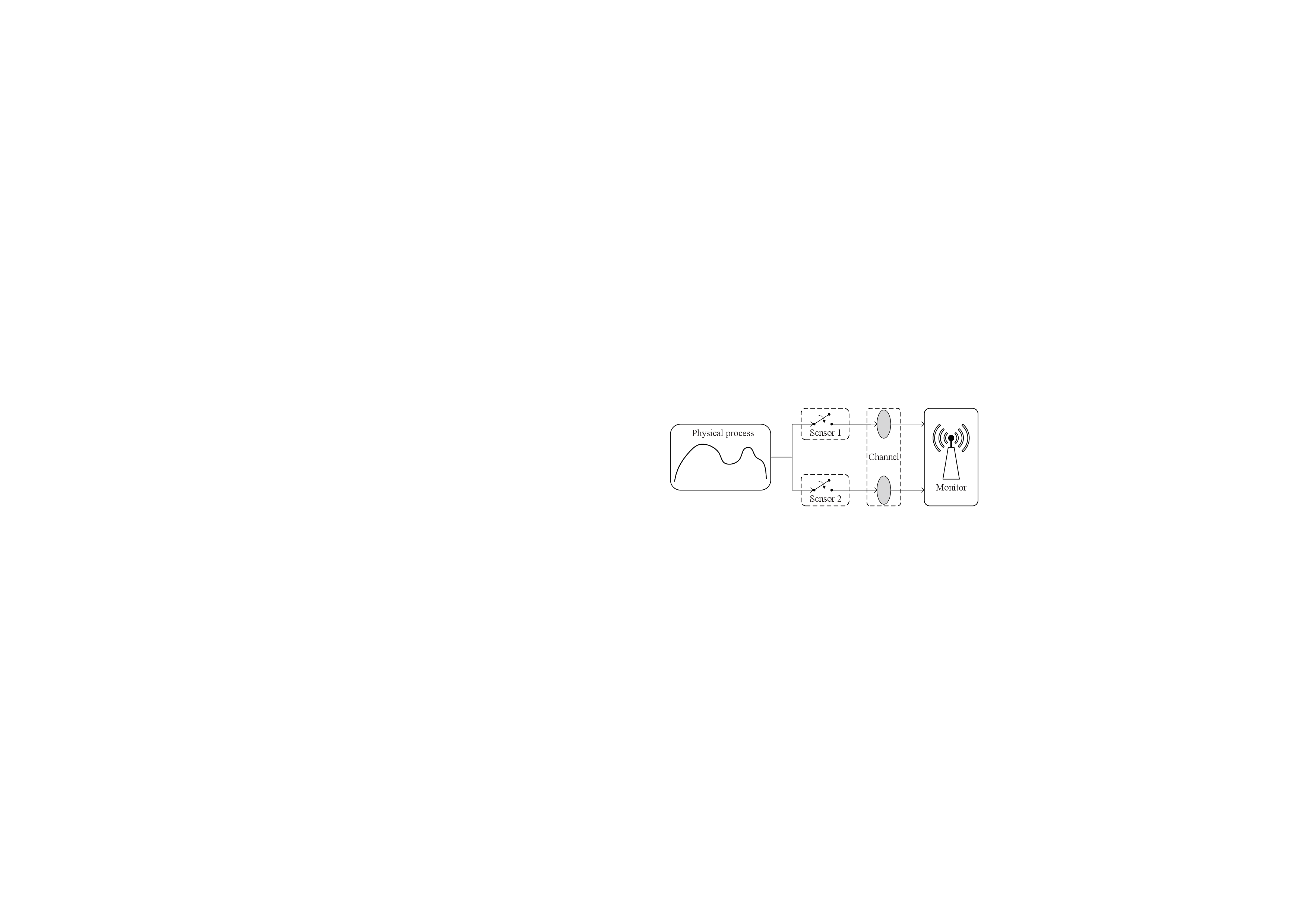}
\caption{DTDQ status update system.}
\label{Model}
\end{figure}

\section{System Model and Preliminaries}\label{system_model}

We consider a non-preemptive DTDQ status update system in which two sensors coordinately sample the same physical process and transmit status update packets to the monitor over the two servers, as shown in Fig.~\ref{Model}.
We follow the generate at will (GAW) model for which the status update system makes a decision on when to sample taking into account a number of factors, including server availability. Due to the randomness of transmission times caused by fading, retransmissions, etc, a packet generated earlier may not reach the monitor before a packet generated later, in which case the former packet becomes obsolete which needs to be discarded by the monitor upon reception
since its reception does not reduce AoI.
A received packet is up to date if its generation was later than the freshest received packet so far.

AoI is used as the main metric in this paper to evaluate the timeliness performance of the DTDQ system.
The evolution of the AoI process $\Delta(t), \ t=0,1,\ldots$ for the established model is illustrated in 
Fig.~\ref{AoI_evolution} for an example scenario.
The time between two consecutive receptions of up-to-date packets amounts to one cycle of the AoI evolution.
The generation and reception instants of the packet that ends Cycle $n$, i.e., the $n$th up-to-date packet, are denoted by $g_n$ and $r_n$, respectively.
In particular, due to the discrete-time setting, the exact end instant of Cycle $n$ is $r_n-1$ instead of $r_n$.
During Cycle $n$, $\Delta(t)$ is incremented by one at each slot, which can be expressed as $\Delta(t)= t - g_{n-1}$.
Note that parallel transmission of two proximately generated packets is more likely to give rise to obsolete packets especially when the servers are homogeneous.
Moreover, it is clear that two proximately generated packets will not be beneficial in improving the timeliness of status updates in terms of AoI even in case neither of the two becomes obsolete.
Consequently, we adopt a sampling strategy in terms of a positive integer freezing parameter $k=1,2,\ldots,$ in order to avoid proximately generated information packets.\footnote{The special case of $k = 0$ (no-freezing) corresponds to the ZW model, where the two servers sample and transmit at any available time slot, and the same packet is transmitted in parallel if both servers are available simultaneously.
The $k = 0$ case is only used as a benchmark for comparisons in the numerical examples.}
Specifically, once a sample is taken and its information packet is transmitted, the next sampling event can be triggered only after $k$ slots, upon the availability of an idle sensor.
When both sensors are busy transmitting status update packets, the sampling needs to be deferred until one of the sensors becomes idle.
When both sensors are idle, one of the sensors is selected to perform sampling and transmission according to a deterministic policy, i.e., either Server 1 ($S_1$) or Server 2 ($S_2$) has strict priority over the other server.
On the other hand, within $k$ slots following a sampling event, no further sampling can be performed even when both sensors become idle. We also note that as the freezing parameter $k$ increases, the servers would not be utilized efficiently leading to a potential increase in AoI. Therefore, the optimum choice of the freezing parameter $k$ 
that will give rise to the most timely system such as one with the lowest mean AoI (or some other AoI-induced metric), is crucial for the DTDQ system of interest.

\begin{figure}[t]
\centering
\includegraphics[width=0.47\textwidth]{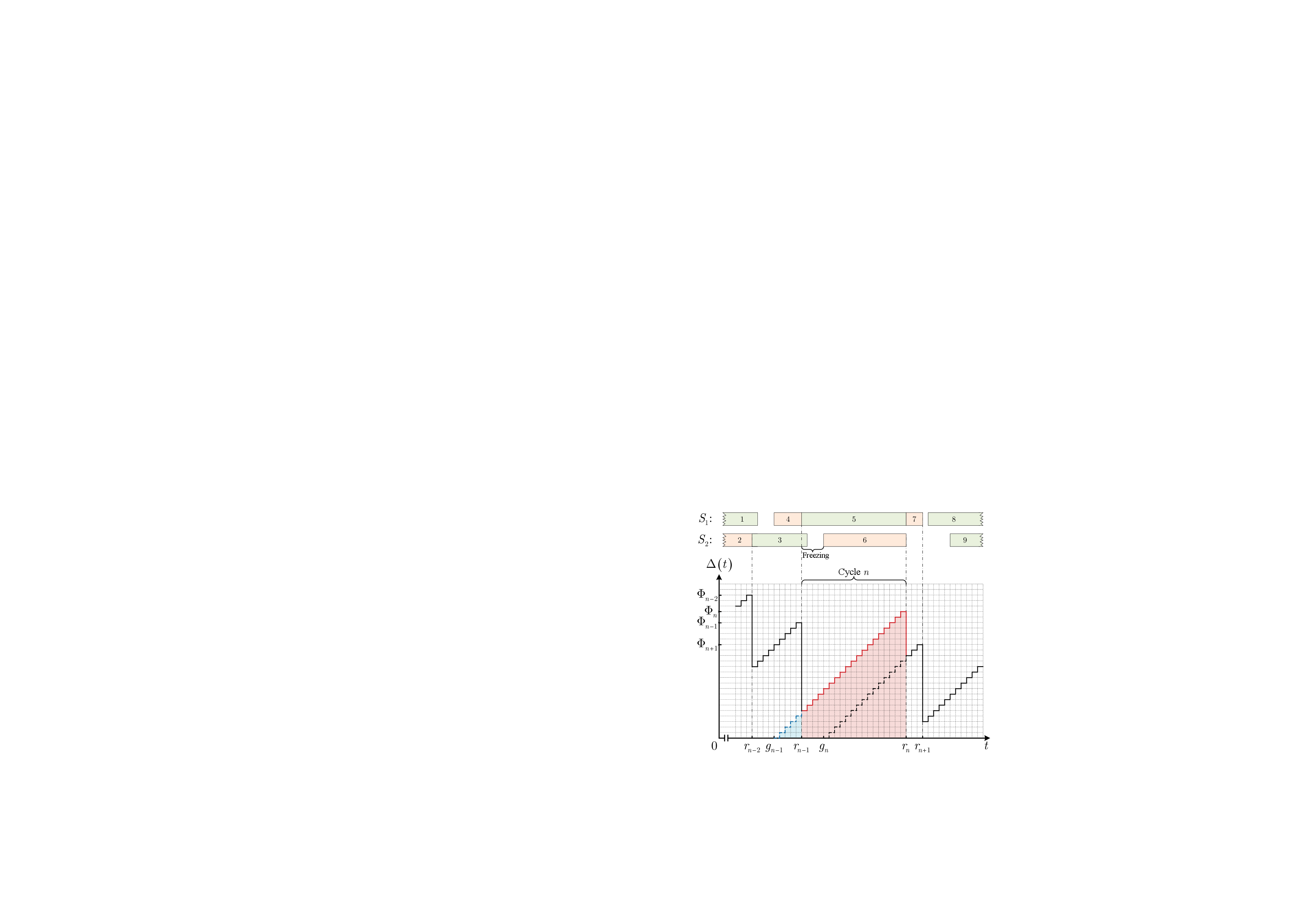}
\caption{An example of the AoI evolution for DTDQ system when $k=4$ and $S_1$ has priority ovr $S_2$.}
\label{AoI_evolution}
\end{figure}

For the example given in Fig.~\ref{AoI_evolution}, the freezing parameter $k$ is set to $4$, and $S_1$ has strict priority over $S_2$ when both servers are idle and a transmission is to take place, and information packets are numbered according to their generation times. 
Note that for this example, Packet 1 and Packet 3 are obsolete since they are received later than Packet 2 and Packet 4, respectively.
When Packet 1 is received, the system is already frozen for one slot, and we wait for $3$ more slots to transmit Packet 4. Similarly, when Packet 3 and Packet 7 are received, we wait for $3$ and $1$ slots, respectively, to comply with the freezing constraint, to transmit the next packet.
When Packets 2, 4, 5, and 6 are received, the freezing durations are already over, hence new packets are generated directly.
Packet 7 and Packet 8 are transmitted over $S_1$ despite the fact that both servers are available due to the priority of $S_1$ over $S_2$.

In this paper, we focus on using the discrete-time AMC-based method to characterize the probability mass function (PMF) of the AoI process $\Delta(t)$ denoted by  $u_{\Delta}(h) = \lim_{t \rightarrow \infty} \Pr (\Delta(t)= h)$ for the steady-state AoI random variable $\Delta$ where $\Pr(\cdot)$ denotes the probability of its argument event. Although the whole PMF will be obtained with our proposed analytical model, 
the mean AoI $\mathbb E\left[ \Delta\right]$ will the main metric of interest in the majority of the numerical examples 
where $\mathbb E\left[ \cdot\right]$ stands for the expectation operator. Note that 
$\mathbb E\left[ \Delta\right]$ is also the average AoI which is $\lim_{K \rightarrow \infty} \frac{1}{K} \sum_{t=0}^{K-1} \Delta(t)$ due to the ergodicity of the AoI process.
Taking Cycle $n$ as the object of observation, the main idea of the AMC method is to construct a discrete-time AMC $\left\{Y_h\right\}, h=0, 1, \ldots$ by considering the generation of the $\left(n-1\right)$st up-to-date packet, denoted by $P^*$, as the initial starting time point and the reception of the $n$th up-to-date packet as the absorption time point.
Clearly, the distribution of steady-state PAoI, denoted by $\Phi$, with PMF $u_{\Phi}(h) = \lim_{n  \rightarrow \infty} \Pr (\Phi_n = h)$, is equal to that of a variable expressed as the absorption time of the AMC $\left\{Y_h\right\}$ minus one.

The main idea of the AMC method \cite{AMC1} is based on the partitioning of the transient states of $\left\{Y_h\right\}$ into two subsets, namely Subset 1 and Subset 2. Subset 1 amounts to the collection of states corresponding to those before the reception of the $\left(n-1\right)$st up-to-date packet 
(blue portion of Fig.~\ref{AoI_evolution}) and these states are called $P^*$ states since in these states $P^*$ is in the system. On the other hand, Subset 2 is the complementary subset of Subset 1 (red portion of Fig.~\ref{AoI_evolution}) and consists of non-$P^*$ states, i.e., states for which $P^*$ is {\em not} in the system. The main idea of the AMC-based approach is that the probability that $\left\{Y_h\right\}$ resides in Subset 2 (or in any of the non-$P^*$ states) at time $h$ turns out to be proportional with  $\Pr (\Delta=h)$.
The detailed exposition of the construction of $\left\{Y_h\right\}$ and the analytical models for the distributions of $\Delta$ and $\Phi$ are presented in Section~\ref{AMCconstruct} and 
Section~\ref{AoIderivation}, respectively. 

Matrices and vectors are denoted by uppercase and lowercase bold letters, respectively. 
Matrix and vector symbols with subscripts are used in this paper to denote the corresponding elements.
A DPH distributed random variable is defined as the time until absorption, denoted by $T$, in a finite-state discrete-time Markov chain (DTMC) $X_h \in \{ 1,\ldots, N, N+1 \}, \ h=0,1,\ldots,$ with the first $N$ states (or phases) being transient states, and the last state being the absorbing state \cite{alfa_book,nielsen_book,telek_book}. 
The DTMC $\{X_h\}$ has the initial probability vector of size $1 \times N$ denoted by $\boldsymbol{\alpha} = \{ \alpha_i \}$ where
$ \alpha_i  = \Pr (X_0 = i), i=1,\ldots,N$,
and probability transition matrix $\boldsymbol G$ of the form,
\begin{align}
\boldsymbol{G} & = \left[ \begin{array}{c|c} \boldsymbol{B} & \boldsymbol{b} \\ \hline  \boldsymbol{0} & 1 \end{array} \right], \label{absorbing}
\end{align}
for a sub-stochastic matrix $\boldsymbol{B}$ and a column vector $\boldsymbol{b}=(\boldsymbol{I}-\boldsymbol{B}) \boldsymbol{1}$ where $\boldsymbol{I}$ denotes the identity matrix. 
$\boldsymbol 0$ and $\boldsymbol{1}$ are an all-zeros row vector and an all-ones column vector of appropriate size, respectively.
In this case, we say $T \sim \text{DPH}(\boldsymbol{\alpha},\boldsymbol{B})$ with order $N$.
The geometric distribution with parameter $p$ is a special case of DPH where $T \sim \text{DPH}(1,1-p)$. 
Discrete distributions with finite support are also DPH-type distributions. 
We refer the reader to \cite{alfa_book} for various DPH-type distributions used in practice. 
Due to the denseness property, service times of arbitrary discrete distributions can be accurately approximated by DPH-type distributions \cite{alfa_book,telek_book}.

In particular, to use $T \sim \text{DPH}(\boldsymbol{\alpha},\boldsymbol{B})$ to represent service times with a finite support PMF that satisfies $u_T(w)=0$ for $w > N$, $\boldsymbol{\alpha}$ should be a row vector of zeros except the first element $\alpha_1=1$, while $\boldsymbol{B}$ is a square matrix of zeros except the element in row $w$ and column $w+1$, which is set to \cite{alfa_book},
\begin{align}
\boldsymbol{B}_{w(w+1)} & = 1-\frac{u_T(w)}{\sum_{a=w}^N u_T(a)}, \ w=1,\ldots,N-1. 
\label{DPHrepresentation}
\end{align}
We assume that in this paper, the service times $T_1$ and $T_2$ of the servers $S_1$ and $S_2$ follow DPH-type distributions $T_1 \sim \text{DPH}\left(\boldsymbol{\alpha^{\left(1\right)}},
\boldsymbol{B^{\left(1\right)}}\right)$ and $T_2 \sim \text{DPH}\left(\boldsymbol{\alpha^{\left(2\right)}},
\boldsymbol{B^{\left(2\right)}}\right)$, with orders $N_1$ and $N_2$, respectively.
The vector of probabilities of transitions from each phase to the absorbing state is given by $\boldsymbol{b^{\left(m\right)}} = \left(\boldsymbol{I} - \boldsymbol{B^{\left(m\right)}}\right) \mathbf{1}, \ m=1,2$.

\section{Discrete-Time AMC Construction with Freezing and DPH-Type Service Times}\label{AMCconstruct}
In this section, we study in detail the construction of the discrete-time AMC $\left\{Y_h\right\}$, which is used to model the operation of DTDQ systems considering freezing and DPH-type service times.
Specifically, $\left\{Y_h\right\}$ is initialized by the arrival of an arbitrary packet $P^*$ into the system, and for Cycle $n$ in Fig.~\ref{AoI_evolution} the successful reception of $P^*$ (which cannot be in an obsolete manner) corresponds to the moment $r_{n-1}$.
After $P^*$ is received, $\left\{Y_h\right\}$ continues to evolve until it moves to the successful absorbing state when the next up-to-date packet is received.
The transition of $\left\{Y_h\right\}$ to the successful absorbing state corresponds to the moment $r_n$.
On the other hand, if $P^*$ is obsolete when being received, it will not be exhibited in Fig.~\ref{AoI_evolution} in the form of a falling edge.
Thus, we set another absorption state, i.e., the unsuccessful absorption state, to ensure the absorption of $\left\{Y_h\right\}$ when this event happens.

\begin{table}
\centering
\caption{State space of the discrete-time AMC $\left\{Y_h\right\}$}
\renewcommand{\arraystretch}{1.5}
\label{Table1}
\begin{tabular}{|c|c|}
\hline
State & Description \\
\hline\hline
$\left(1,i,j,l\right)$ & $P^*$ on $S_1$, $\tau_1 > \tau_2$ \\
\hline
$\left(2,i,j,l\right)$ & $P^*$ on $S_1$, $\tau_1 < \tau_2$ \\
\hline
$\left(3,i,j,l\right)$ & $P^*$ on $S_2$, $\tau_1 > \tau_2$ \\
\hline
$\left(4,i,j,l\right)$ & $P^*$ on $S_2$, $\tau_1 < \tau_2$ \\
\hline
$\left(5,i,0,l\right)$ & $P^*$ on $S_1$, $S_2$ idle \\
\hline
$\left(6,0,j,l\right)$ & $P^*$ on $S_2$, $S_1$ idle \\
\hline
$\left(7,0,0,l\right)$ & $S_1$, $S_2$ idle \\
\hline
$\left(8,i,j\right)$ & $P_1$, $P_2$ up to date \\
\hline
$\left(9,i,0,l\right)$ & $P_1$ obsolete, $S_2$ idle \\
\hline
$\left(10,0,j,l\right)$ & $P_2$ obsolete, $S_1$ idle \\
\hline
$\left(11,i,j,l\right)$ & $P_1$ up to date, $P_2$ obsolete \\
\hline
$\left(12,i,j,l\right)$ & $P_2$ up to date, $P_1$ obsolete \\
\hline
$\left(13,i,0,l\right)$ & $P_1$ up to date, $S_2$ idle \\
\hline
$\left(14,0,j,l\right)$ & $P_2$ up to date, $S_1$ idle \\
\hline
$15$ & Successful absorbing state \\
\hline
$16$ & Unsuccessful absorbing state \\
\hline
\end{tabular}
\end{table}

\begin{table*}
\centering
\caption{Transition probabilities of the discrete-time AMC $\left\{Y_h\right\}$}
\renewcommand{\arraystretch}{1.5}
\label{Table2}
\begin{tabular}{|p{1.28cm}<{\centering}|p{2.9cm}<{\centering}|p{1.1cm}<{\centering}|p{1.85cm}<{\centering}|p{1.28cm}<{\centering}|p{2.9cm}<{\centering}|p{1.1cm}<{\centering}|p{1.85cm}<{\centering}|}
\hline
From & \multicolumn{2}{c|}{To} & Probability & From & \multicolumn{2}{c|}{To} & Probability \\
\hline\hline
\multirow{9}*{$\left(1,i,j,l\right)$} & \multicolumn{2}{c|}{$\left(10,0,j^\prime,l+1\right)$ for $l<k-1$} & $\boldsymbol{b}^{\boldsymbol{\left(1\right)}}_i\boldsymbol{B}^{\boldsymbol{\left(2\right)}}_{jj^\prime}$ & \multirow{2}*{$\left(6,0,j,l\right)$} & \multicolumn{2}{c|}{$\left(6, 0, j^\prime, l+1\right)$ for $l<k-1$} & $\boldsymbol{B}_{jj^\prime}^{\boldsymbol{\left(2\right)}}$ \\
\cline{2-4}\cline{6-8}
~ & \multicolumn{2}{c|}{$\left(11,i^\prime, j^\prime, 0\right)$ for $l=k-1$} & $\boldsymbol{b}^{\boldsymbol{\left(1\right)}}_i\boldsymbol{B}^{\boldsymbol{\left(2\right)}}_{jj^\prime}\boldsymbol{\alpha} ^{\boldsymbol{\left(1\right)}}_{i^\prime}$ & ~ & \multicolumn{2}{c|}{$\left(3, i^\prime, j^\prime, 0\right)$ for $l=k-1$} & $\boldsymbol{B}_{jj^\prime}^{\boldsymbol{\left(2\right)}}\boldsymbol{\alpha} _{i^\prime}^{\boldsymbol{\left(1\right)}}$ \\
\cline{2-8}
~ & \multicolumn{2}{c|}{$\left(7, 0, 0, l+1\right)$ for $l<k-1$} & $\boldsymbol{b}_i^{\boldsymbol{\left(1\right)}}\boldsymbol{b}_j^{\boldsymbol{\left(2\right)}}$ & \multirow{3}*{$\left(7,0,0,l\right)$} & \multicolumn{2}{c|}{$\left(7,0,0,l+1\right)$ for $l<k-1$} & $1$ \\
\cline{2-4}\cline{6-8}
~ & $S_1$ priority: $\left(13,i^\prime,0,0\right)$ & for & $\boldsymbol{b}_i^{\boldsymbol{\left(1\right)}}\boldsymbol{b}_j^{\boldsymbol{\left(2\right)}}\boldsymbol{\alpha} _{i^\prime}^{\boldsymbol{\left(1\right)}}$ & ~ & $S_1$ priority: $\left(13,i^\prime,0,0\right)$ & for & $\boldsymbol{\alpha} _{i^\prime}^{\boldsymbol{\left(1\right)}}$ \\
\cline{2-2}\cline{4-4}\cline{6-6}\cline{8-8}
~ & $S_2$ priority: $\left(14,0,j^\prime,0\right)$ & $l=k-1$ & $\boldsymbol{b}_i^{\boldsymbol{\left(1\right)}}\boldsymbol{b}_j^{\boldsymbol{\left(2\right)}}\boldsymbol{\alpha} _{j^\prime}^{\boldsymbol{\left(2\right)}}$ & ~ & $S_2$ priority: $\left(14,0,j^\prime,0\right)$ & $l=k-1$ & $\boldsymbol{b}_j^{\boldsymbol{\left(2\right)}}\boldsymbol{\alpha} _{j^\prime}^{\boldsymbol{\left(2\right)}}$  \\
\cline{2-8}
~ & \multicolumn{2}{c|}{$\left(5, i^\prime, 0, l+1\right)$ for
$l<k-1$} & $\boldsymbol{b}_j^{\boldsymbol{\left(2\right)}}\boldsymbol{B}_{ii^\prime}^{\boldsymbol{\left(1\right)}}$ & \multirow{3}*{$\left(8,i,j\right)$} & \multicolumn{2}{c|}{\multirow{2}*{$15$}} & $\boldsymbol{b}_i^{\boldsymbol{\left(1\right)}} + \boldsymbol{b}_j^{\boldsymbol{\left(2\right)}} - $ \\
\cline{2-4}
~ & \multicolumn{2}{c|}{$\left(2, i^\prime, j^\prime, 0\right)$ for
$l=k-1$} & $\boldsymbol{b}_j^{\boldsymbol{\left(2\right)}}\boldsymbol{B}_{ii^\prime}^{\boldsymbol{\left(1\right)}}\boldsymbol{\alpha} _{j^\prime}^{\boldsymbol{\left(2\right)}}$ & ~ & \multicolumn{2}{c|}{~} & $\boldsymbol{b}_i^{\boldsymbol{\left(1\right)}}\boldsymbol{b}_j^{\boldsymbol{\left(2\right)}}$ \\
\cline{2-4}\cline{6-8}
~ & \multicolumn{2}{c|}{$\left(1, i^\prime, j^\prime, l+1\right)$ for $l<k-1$} & \multirow{2}*{$\boldsymbol{B}_{ii^\prime}^{\boldsymbol{\left(1\right)}}\boldsymbol{B}_{jj^\prime}^{\boldsymbol{\left(2\right)}}$} & ~ & \multicolumn{2}{c|}{$\left(8,i^\prime,j^\prime\right)$} & $\boldsymbol{B}_{ii^\prime}^{\boldsymbol{\left(1\right)}}\boldsymbol{B}_{jj^\prime}^{\boldsymbol{\left(2\right)}}$ \\
\cline{2-3}\cline{5-8}
~ & \multicolumn{2}{c|}{$\left(1, i^\prime, j^\prime, k-1\right)$ for $l=k-1$} & ~ & \multirow{5}*{$\left(9,i,0,l\right)$} & \multicolumn{2}{c|}{$\left(7,0,0,l+1\right)$ for $l<k-1$} & $\boldsymbol{b}_i^{\boldsymbol{\left(1\right)}}$ \\
\cline{1-4}\cline{6-8}
\multirow{5}*{$\left(2,i,j,l\right)$} & \multicolumn{2}{c|}{$\left(14,0,j^\prime,l+1\right)$ for $l<k-1$} & $\boldsymbol{b}_i^{\boldsymbol{\left(1\right)}}\boldsymbol{B}_{jj^\prime}^{\boldsymbol{\left(2\right)}}$ & ~ & $S_1$ priority: $\left(13,i^\prime,0,0\right)$ & for & $\boldsymbol{b}_i^{\boldsymbol{\left(1\right)}}\boldsymbol{\alpha} _{i^\prime}^{\boldsymbol{\left(1\right)}}$ \\
\cline{2-4}\cline{6-6}\cline{8-8}
~ & \multicolumn{2}{c|}{$\left(8,i^\prime, j^\prime\right)$ for $l=k-1$} & $\boldsymbol{b}_i^{\boldsymbol{\left(1\right)}}\boldsymbol{B}_{jj^\prime}^{\boldsymbol{\left(2\right)}}\boldsymbol{\alpha} _{i^\prime}^{\boldsymbol{\left(1\right)}}$ & ~ & $S_2$ priority: $\left(14,0,j^\prime,0\right)$ & $l=k-1$ & $\boldsymbol{b}_i^{\boldsymbol{\left(1\right)}}\boldsymbol{\alpha} _{j^\prime}^{\boldsymbol{\left(2\right)}}$  \\
\cline{2-4}\cline{6-8}
~ & \multicolumn{2}{c|}{$\left(2, i^\prime, j^\prime, l+1\right)$ for $l<k-1$} & \multirow{2}*{$\boldsymbol{B}_{ii^\prime}^{\boldsymbol{\left(1\right)}}\boldsymbol{B}_{jj^\prime}^{\boldsymbol{\left(2\right)}}$} & ~ & \multicolumn{2}{c|}{$\left(9, i^\prime, 0, l+1\right)$ for $l<k-1$} & $\boldsymbol{B}_{ii^\prime}^{\boldsymbol{\left(1\right)}}$ \\
\cline{2-3}\cline{6-8}
~ & \multicolumn{2}{c|}{$\left(2, i^\prime, j^\prime, k-1\right)$ for $l=k-1$} & ~ & ~ & \multicolumn{2}{c|}{$\left(12, i^\prime, j^\prime, 0\right)$ for $l=k-1$} & $\boldsymbol{B}_{ii^\prime}^{\boldsymbol{\left(1\right)}}\boldsymbol{\alpha} _{j^\prime}^{\boldsymbol{\left(2\right)}}$ \\
\cline{2-8}
~ & \multicolumn{2}{c|}{$16$} & $\boldsymbol{b}_j^{\boldsymbol{\left(2\right)}}$ & \multirow{5}*{$\left(10, 0, j, l\right)$} & \multicolumn{2}{c|}{$\left(7,0,0,l+1\right)$ for $l<k-1$} & $\boldsymbol{b}_j^{\boldsymbol{\left(2\right)}}$ \\
\cline{1-4}\cline{6-8}
\multirow{5}*{$\left(3,i,j,l\right)$} & \multicolumn{2}{c|}{$\left(13,i^\prime,0,l+1\right)$ for $l<k-1$} & $\boldsymbol{b}_j^{\boldsymbol{\left(2\right)}}\boldsymbol{B}_{ii^\prime}^{\boldsymbol{\left(1\right)}}$ & ~ & $S_1$ priority: $\left(13,i^\prime,0,0\right)$ & for & $\boldsymbol{b}_j^{\boldsymbol{\left(2\right)}}\boldsymbol{\alpha} _{i^\prime}^{\boldsymbol{\left(1\right)}}$ \\
\cline{2-4}\cline{6-6}\cline{8-8}
~ & \multicolumn{2}{c|}{$\left(8,i^\prime, j^\prime\right)$ for $l=k-1$} & $\boldsymbol{b}_j^{\boldsymbol{\left(2\right)}}\boldsymbol{B}_{ii^\prime}^{\boldsymbol{\left(1\right)}}\boldsymbol{\alpha} _{j^\prime}^{\boldsymbol{\left(2\right)}}$ & ~ & $S_2$ priority: $\left(14,0,j^\prime,0\right)$ & $l=k-1$ & $\boldsymbol{b}_i^{\boldsymbol{\left(1\right)}}\boldsymbol{\alpha} _{j^\prime}^{\boldsymbol{\left(2\right)}}$ \\
\cline{2-4}\cline{6-8}
~ & \multicolumn{2}{c|}{$\left(3, i^\prime, j^\prime, l+1\right)$ for $l<k-1$} & \multirow{2}*{$\boldsymbol{B}_{ii^\prime}^{\boldsymbol{\left(1\right)}}\boldsymbol{B}_{jj^\prime}^{\boldsymbol{\left(2\right)}}$} & ~ & \multicolumn{2}{c|}{$\left(10, 0, j^\prime, l+1\right)$ for $l<k-1$} &
$\boldsymbol{B}_{jj^\prime}^{\boldsymbol{\left(2\right)}}$ \\
\cline{2-3}\cline{6-8}
~ & \multicolumn{2}{c|}{$\left(3, i^\prime, j^\prime, k-1\right)$ for $l=k-1$} & ~ & ~ & \multicolumn{2}{c|}{$\left(11, i^\prime, j^\prime, 0\right)$ for $l=k-1$} &
$\boldsymbol{B}_{jj^\prime}^{\boldsymbol{\left(2\right)}}\boldsymbol{\alpha} _{i^\prime}^{\boldsymbol{\left(1\right)}}$ \\
\cline{2-8}
~ & \multicolumn{2}{c|}{$16$} & $\boldsymbol{b}_i^{\boldsymbol{\left(1\right)}}$ & \multirow{5}*{$\left(11, i, j, l\right)$} & \multicolumn{2}{c|}{$15$} & $\boldsymbol{b}_i^{\boldsymbol{\left(1\right)}}$ \\
\cline{1-4}\cline{6-8}
\multirow{9}*{$\left(4,i,j,l\right)$} & \multicolumn{2}{c|}{$\left(6,0,j^\prime,l+1\right)$ for $l<k-1$} & $\boldsymbol{b}_i^{\boldsymbol{\left(1\right)}}\boldsymbol{B}_{jj^\prime}^{\boldsymbol{\left(2\right)}}$ & ~ & \multicolumn{2}{c|}{$\left(13,i^\prime,0,l+1\right)$ for $l<k-1$} & $\boldsymbol{b}_j^{\boldsymbol{\left(2\right)}}\boldsymbol{B}_{ii^\prime}^{\boldsymbol{\left(1\right)}}$ \\
\cline{2-4}\cline{6-8}
~ & \multicolumn{2}{c|}{$\left(3,i^\prime, j^\prime, 0\right)$ for $l=k-1$} & $\boldsymbol{b}_i^{\boldsymbol{\left(1\right)}}\boldsymbol{B}_{jj^\prime}^{\boldsymbol{\left(2\right)}}\boldsymbol{\alpha} _{i^\prime}^{\boldsymbol{\left(1\right)}}$ & ~ & \multicolumn{2}{c|}{$\left(8,i^\prime, j^\prime\right)$ for $l=k-1$} & $\boldsymbol{b}_j^{\boldsymbol{\left(2\right)}}\boldsymbol{B}_{ii^\prime}^{\boldsymbol{\left(1\right)}}\boldsymbol{\alpha} _{j^\prime}^{\boldsymbol{\left(2\right)}}$ \\
\cline{2-4}\cline{6-8}
~ & \multicolumn{2}{c|}{$\left(7, 0, 0, l+1\right)$ for $l<k-1$} & $\boldsymbol{b}_i^{\boldsymbol{\left(1\right)}}\boldsymbol{b}_j^{\boldsymbol{\left(2\right)}}$ & ~ & \multicolumn{2}{c|}{$\left(11, i^\prime, j^\prime, l+1\right)$ for $l<k-1$} & \multirow{2}*{$\boldsymbol{B}_{ii^\prime}^{\boldsymbol{\left(1\right)}}\boldsymbol{B}_{jj^\prime}^{\boldsymbol{\left(2\right)}}$}  \\
\cline{2-4}\cline{6-7}
~ & $S_1$ priority: $\left(13,i^\prime,0,0\right)$ & for & $\boldsymbol{b}_i^{\boldsymbol{\left(1\right)}}\boldsymbol{b}_j^{\boldsymbol{\left(2\right)}}\boldsymbol{\alpha} _{i^\prime}^{\boldsymbol{\left(1\right)}}$ & ~ & \multicolumn{2}{c|}{$\left(11, i^\prime, j^\prime, k-1\right)$ for $l=k-1$} & ~ \\
\cline{2-2}\cline{4-8}
~ & $S_2$ priority: $\left(14,0,j^\prime,0\right)$ & $l=k-1$ & $\boldsymbol{b}_i^{\boldsymbol{\left(1\right)}}\boldsymbol{b}_j^{\boldsymbol{\left(2\right)}}\boldsymbol{\alpha} _{j^\prime}^{\boldsymbol{\left(2\right)}}$ & \multirow{5}*{$\left(12, i, j, l\right)$} & \multicolumn{2}{c|}{$15$} & $\boldsymbol{b}_j^{\boldsymbol{\left(2\right)}}$ \\
\cline{2-4}\cline{6-8}
~ & \multicolumn{2}{c|}{$\left(9, i^\prime, 0, l+1\right)$ for
$l<k-1$} & $\boldsymbol{b}_j^{\boldsymbol{\left(2\right)}}\boldsymbol{B}_{ii^\prime}^{\boldsymbol{\left(1\right)}}$ & ~ & \multicolumn{2}{c|}{$\left(14,0,j^\prime,l+1\right)$ for $l<k-1$} & $\boldsymbol{b}_i^{\boldsymbol{\left(1\right)}}\boldsymbol{B}_{jj^\prime}^{\boldsymbol{\left(2\right)}}$ \\
\cline{2-4}\cline{6-8}
~ & \multicolumn{2}{c|}{$\left(12, i^\prime, j^\prime, 0\right)$ for
$l=k-1$} & $\boldsymbol{b}_j^{\boldsymbol{\left(2\right)}}\boldsymbol{B}_{ii^\prime}^{\boldsymbol{\left(1\right)}}\boldsymbol{\alpha} _{j^\prime}^{\boldsymbol{\left(2\right)}}$ & ~ & \multicolumn{2}{c|}{$\left(8,i^\prime, j^\prime\right)$ for $l=k-1$} & $\boldsymbol{b}_i^{\boldsymbol{\left(1\right)}}\boldsymbol{B}_{jj^\prime}^{\boldsymbol{\left(2\right)}}\boldsymbol{\alpha} _{i^\prime}^{\boldsymbol{\left(1\right)}}$ \\
\cline{2-4}\cline{6-8}
~ & \multicolumn{2}{c|}{$\left(4, i^\prime, j^\prime, l+1\right)$ for $l<k-1$} & \multirow{2}*{$\boldsymbol{B}_{ii^\prime}^{\boldsymbol{\left(1\right)}}\boldsymbol{B}_{jj^\prime}^{\boldsymbol{\left(2\right)}}$} & ~ & \multicolumn{2}{c|}{$\left(12, i^\prime, j^\prime, l+1\right)$ for $l<k-1$} & \multirow{2}*{$\boldsymbol{B}_{ii^\prime}^{\boldsymbol{\left(1\right)}}\boldsymbol{B}_{jj^\prime}^{\boldsymbol{\left(2\right)}}$} \\
\cline{2-3}\cline{6-7}
~ & \multicolumn{2}{c|}{$\left(4, i^\prime, j^\prime, k-1\right)$ for $l=k-1$} & ~ & ~ & \multicolumn{2}{c|}{$\left(12, i^\prime, j^\prime, k-1\right)$ for $l=k-1$} & ~  \\
\cline{1-8}
\multirow{5}*{$\left(5,i,0,l\right)$} & \multicolumn{2}{c|}{$\left(7,0,0,l+1\right)$ for $l<k-1$} & $\boldsymbol{b}_i^{\boldsymbol{\left(1\right)}}$ & \multirow{3}*{$\left(13,i,0,l\right)$} & \multicolumn{2}{c|}{$15$} & $\boldsymbol{b}_i^{\boldsymbol{\left(1\right)}}$ \\
\cline{2-4}\cline{6-8}
~ & $S_1$ priority: $\left(13,i^\prime,0,0\right)$ & for & $\boldsymbol{b}_i^{\boldsymbol{\left(1\right)}}\boldsymbol{\alpha} _{i^\prime}^{\boldsymbol{\left(1\right)}}$ & ~ & \multicolumn{2}{c|}{$\left(13,i^\prime,0,l+1\right)$ for $l<k-1$} & $\boldsymbol{B}_{ii^\prime}^{\boldsymbol{\left(1\right)}}$ \\
\cline{2-2}\cline{4-4}\cline{6-8}
~ & $S_2$ priority: $\left(14,0,j^\prime,0\right)$ & $l=k-1$ & $\boldsymbol{b}_i^{\boldsymbol{\left(1\right)}}\boldsymbol{\alpha} _{j^\prime}^{\boldsymbol{\left(2\right)}}$ & ~ & \multicolumn{2}{c|}{$\left(8,i^\prime,j^\prime\right)$ for $l=k-1$} & $\boldsymbol{B}_{ii^\prime}^{\boldsymbol{\left(1\right)}}\boldsymbol{\alpha} _{j^\prime}^{\boldsymbol{\left(2\right)}}$ \\
\cline{2-8}
~ & \multicolumn{2}{c|}{$\left(5, i^\prime, 0, l+1\right)$ for $l<k-1$} & $\boldsymbol{B}_{ii^\prime}^{\boldsymbol{\left(1\right)}}$ & \multirow{3}*{$\left(14,0,j,l\right)$} & \multicolumn{2}{c|}{$15$} & $\boldsymbol{b}_j^{\boldsymbol{\left(2\right)}}$ \\
\cline{2-4}\cline{6-8}
~ & \multicolumn{2}{c|}{$\left(2, i^\prime, j^\prime, 0\right)$ for $l=k-1$} & $\boldsymbol{B}_{ii^\prime}^{\boldsymbol{\left(1\right)}}\boldsymbol{\alpha} _{j^\prime}^{\boldsymbol{\left(2\right)}}$ & ~ & \multicolumn{2}{c|}{$\left(14,0,j^\prime,l+1\right)$ for $l<k-1$} & $\boldsymbol{B}_{jj^\prime}^{\boldsymbol{\left(2\right)}}$ \\
\cline{1-4}\cline{6-8}
\multirow{3}*{$\left(6,0,j,l\right)$} & \multicolumn{2}{c|}{$\left(7,0,0,l+1\right)$ for $l<k-1$} & $\boldsymbol{b}_j^{\boldsymbol{\left(2\right)}}$ & ~ & \multicolumn{2}{c|}{$\left(8,i^\prime,j^\prime\right)$ for $l=k-1$} & $\boldsymbol{B}_{jj^\prime}^{\boldsymbol{\left(2\right)}}\boldsymbol{\alpha} _{i^\prime}^{\boldsymbol{\left(1\right)}}$ \\
\cline{2-8}
~ & $S_1$ priority: $\left(13,i^\prime,0,0\right)$ & for & $\boldsymbol{b}_j^{\boldsymbol{\left(2\right)}}\boldsymbol{\alpha} _{i^\prime}^{\boldsymbol{\left(1\right)}}$ & $15$ & \multicolumn{3}{c|}{Absorbing state and no transition} \\
\cline{2-2}\cline{4-8}
~ & $S_2$ priority: $\left(14,0,j^\prime,0\right)$ & $l=k-1$ & $\boldsymbol{b}_j^{\boldsymbol{\left(2\right)}}\boldsymbol{\alpha} _{j^\prime}^{\boldsymbol{\left(2\right)}}$ & $16$ & \multicolumn{3}{c|}{Absorbing state and no transition} \\
\hline
\end{tabular}
\end{table*}

The state space of $\left\{Y_h\right\}$ is shown in Table \ref{Table1}, including fourteen classes of transient states and two absorbing states.
The first six classes of transient states correspond to the cases when $P^*$ has not yet been received, while the last eight classes correspond to the cases after $P^*$ has been successfully received.
Let $\tau_1$ and $\tau_2$ denote the generation instants of the packets that $S_1$ and $S_2$ are each serving.
$P_1$ and $P_2$ denote the packets served by $S_1$ and $S_2$, respectively, after $P^*$ is successfully received.
Due to the use of freezing and DPH-type service times, the transient states are all four-dimensional, where $i=1,2,\ldots,N_1$ is the phase of service time for $S_1$, $j=1,2,\ldots,N_2$ is the phase of service time for $S_2$, and $l=0,1,\ldots,k-1$ is the freezing clock, representing the time elapsed since a packet was generated.
In particular, when $\left\{Y_h\right\}$ is in $\left(8,i,j\right)$, the reception of either of the two up-to-date packets is capable of bringing $\left\{Y_h\right\}$ to the successful absorbing state, independent of the freezing clock so that $l$ can be omitted.
In addition, for transient states in which $S_1$ or $S_2$ is idle, we write $i$ or $j$ as zero.

Based on the designed state space, we further investigate the transition probability matrix of $\left\{Y_h\right\}$, which is given by
\begin{align}
\label{TransitionMatrix}
\boldsymbol Q = \left[ {\begin{array}{*{20}{c}}
\boldsymbol A&{\boldsymbol{c_s}}&{\boldsymbol{c_u}}\\
{\boldsymbol 0}&1&{0}\\
{\boldsymbol 0}&{0}&1
\end{array}} \right].
\end{align}
$\boldsymbol A$ is an $M \times M$ matrix consisting of transition probabilities among the transient states of $\left\{Y_h\right\}$, where $M$ is the total number of transient states.
$\boldsymbol c_s$ and $\boldsymbol c_u$ are $M \times 1$ column vectors consisting of the probabilities of $\left\{Y_h\right\}$ transitioning from the transient states to the successful absorbing state and the unsuccessful absorbing state, respectively.
Furthermore, $M$ depends on the freezing parameter $k$ and the number of phases $N_1$ and $N_2$ for the two service time distributions.
Specifically, we have
\begin{align}
\label{OurderOfA}
M = k\left(6N_1N_2 + 3\left(N_1 + N_2\right) + 1\right) + N_1N_2.
\end{align}
From (\ref{OurderOfA}), it can be seen that the order of $\boldsymbol A$ can easily be quite large due to the joint consideration of freezing and DPH-type service times.
The minimum value of $M$ is $14$, which occurs when $N_1=N_2=1$ (e.g., geometrically distributed service times) and $k=1$.
If we increase the three parameters, say $N_1=N_2=2$ and $k=3$, $M$ increases to $115$.
The specific transition probabilities of $\left\{Y_h\right\}$ are given in Table \ref{Table2}, where $i^\prime$ and $j^\prime$ denote the new phases of service times for $S_1$ and $S_2$ after a state transition, respectively.
For the cases where both servers are available, we use in Table \ref{Table2} a policy of fixed choice of $S_1$ or $S_2$, i.e., $S_1$ priority or $S_2$ priority, to serve the newly generated packets.

Take the state transitions from $\left(1,i,j,l\right)$ as an example to explain Table \ref{Table2}.
When $\left\{Y_h\right\}$ is in $\left(1,i,j,l\right)$, $P^*$ is on $S_1$ and is more recent than the packet on $S_2$.
There are eight possible events of state transition that can occur in the next step, depending on whether the packets on $S_1$ and $S_2$ are received and whether the freezing clock reaches $k - 1$.
For the cases where $P^*$ is received by the monitor and $P_2$ is not, $S_1$ will become idle if $l<k-1$ and the timing of the freezing clock is increased by $1$, corresponding to $\left(10,0,j^\prime,l+1\right)$, while a newly generated $P_1$ will be serviced by $S_1$ if $l=k - 1$ and the freezing clock returns to $0$, corresponding to $\left(11,i^\prime, j^\prime, 0\right)$.
Note that once a state transition occurs and there is still a packet on a server, the phase of its service time will also transfer, including the possibility of transferring to the same phase.
For the cases where $P^*$ and $P_2$ are both received, both $S_1$ and $S_2$ will become idle when $l<k-1$ and the timing of the freezing clock is increased by $1$, corresponding to $\left(7,0,0,l+1\right)$. When $l=k - 1$, a newly generated $P_1$ or $P_2$ will be serviced by $S_1$ or $S_2$, the other server will become idle, and the freezing clock returns to $0$, corresponding to $\left(13,i^\prime, 0, 0\right)$ or $\left(14,0,j^\prime,0\right)$.
For the cases where $P_2$ is received and $P^*$ is not, the changes in $S_1$ and $S_2$ are contrary to that of the cases where $P^*$ is received and $P_2$ is not.
The next state of $\left\{Y_h\right\}$ is $\left(5, i^\prime, 0, l+1\right)$ if $l<k-1$ and $\left(2, i^\prime, j^\prime, 0\right)$ if $l=k-1$.
In addition, for the cases where neither $P^*$ nor $P_2$ are received by the monitor, the only change in $S_1$ and $S_2$ is a transition in the phases of their service times. No new packet is generated even though the freezing clock has reached $k-1$, which continues to maintain $l = k-1$ in the next step.

In particular, when $\left\{Y_h\right\}$ is in $\left(2,i,j,l\right)$ and $\left(3,i,j,l\right)$, the generation moment of $P^*$ precedes that of another packet in the system, whose receipt is possible no later than $P^*$.
If this event occurs, $P^*$ will become obsolete, thus causing $\left\{Y_h\right\}$ to terminate its evolution in a way that transfers to one of the absorbing states, $16$.
On the other hand, for $\left(8,i,j\right)$, $\left(9,i,0,l\right)$, $\left(10,0,j,l\right)$, $\left(11,i,j,l\right)$, $\left(12,i,j,l\right)$, $\left(13,i,0,l\right)$, and $\left(14,0,j,l\right)$, $\left\{Y_h\right\}$ can transfer to the other absorbing state, $15$, once a up-to-date packet is received by the monitor.
It can be observed as a falling edge in Fig.~\ref{AoI_evolution} only if $\left\{Y_h\right\}$ terminates with a transition to the state $15$, which is the situation we mainly focus on.

Based on Table \ref{Table2}, we can easily derive the expressions of $\boldsymbol A$, $\boldsymbol c_s$, $\boldsymbol c_u$, and thus $\boldsymbol Q$.
Given the generally large order and limited space, we give the exact expressions for the case of $M = 14$ as an example.
Assuming the policy of $S_1$ priority, $\boldsymbol A$, $\boldsymbol c_s$, and $\boldsymbol c_u$ are provided by (\ref{QforM14}), (\ref{Cs}), and (\ref{Cu}), respectively.
The symbol $\top$ denotes the transpose operation.
Since $N_1 = N_2 = 1$ in this case, $\boldsymbol{B^{\left(m\right)}}$, $\boldsymbol{b^{\left(m\right)}}$, and $\boldsymbol{\alpha^{\left(m\right)}}$ all contain only one element and $\boldsymbol{\alpha}^{\boldsymbol{\left(m\right)}}_{1}$ is equal to $1$.
Similarly, we can obtain the transition probability matrix of $\left\{Y_h\right\}$ for any DPH-type service times and freezing parameters, even though the order might be much larger than that of (\ref{QforM14}).

\begin{figure*}[!t]
\normalsize
\setcounter{MYtempeqncnt}{\value{equation}}
\setcounter{equation}{4}
\begin{equation}
\label{QforM14}
\setlength{\arraycolsep}{3.4pt}
\boldsymbol{A}=\left[ {\begin{array}{*{20}{c}}
\boldsymbol{B}_{11}^{\boldsymbol{\left( 1 \right)}}\boldsymbol{B}_{11}^{\boldsymbol{\left( 2 \right)}}&\boldsymbol{b}_1^{\boldsymbol{\left( 2 \right)}}\boldsymbol{B}_{11}^{\boldsymbol{\left( 1 \right)}}&0&0&0&0&0&0&0&0&\boldsymbol{b}_1^{\boldsymbol{\left( 1 \right)}}\boldsymbol{B}_{11}^{\boldsymbol{\left( 2 \right)}}&0&\boldsymbol{b}_1^{\boldsymbol{\left( 1 \right)}}\boldsymbol{b}_1^{\boldsymbol{\left( 2 \right)}}&0\\
0&\boldsymbol{B}_{11}^{\boldsymbol{\left( 1 \right)}}\boldsymbol{B}_{11}^{\boldsymbol{\left( 2 \right)}}&0&0&0&0&0&\boldsymbol{b}_1^{\boldsymbol{\left( 1 \right)}}\boldsymbol{B}_{11}^{\boldsymbol{\left( 2 \right)}}&0&0&0&0&0&0\\
0&0&\boldsymbol{B}_{11}^{\boldsymbol{\left( 1 \right)}}\boldsymbol{B}_{11}^{\boldsymbol{\left( 2 \right)}}&0&0&0&0&\boldsymbol{b}_1^{\boldsymbol{\left( 2 \right)}}\boldsymbol{B}_{11}^{\boldsymbol{\left( 1 \right)}}&0&0&0&0&0&0\\
0&0&\boldsymbol{b}_1^{\boldsymbol{\left( 1 \right)}}\boldsymbol{B}_{11}^{\boldsymbol{\left( 2 \right)}}&\boldsymbol{B}_{11}^{\boldsymbol{\left( 1 \right)}}\boldsymbol{B}_{11}^{\boldsymbol{\left( 2 \right)}}&0&0&0&0&0&0&0&\boldsymbol{b}_1^{\boldsymbol{\left( 2 \right)}}\boldsymbol{B}_{11}^{\boldsymbol{\left( 1 \right)}}&\boldsymbol{b}_1^{\boldsymbol{\left( 1 \right)}}\boldsymbol{b}_1^{\boldsymbol{\left( 2 \right)}}&0\\
0&\boldsymbol{B}_{11}^{\boldsymbol{\left( 1 \right)}}&0&0&0&0&0&0&0&0&0&0&\boldsymbol{b}_1^{\boldsymbol{\left( 1 \right)}}&0\\
0&0&\boldsymbol{B}_{11}^{\boldsymbol{\left( 2 \right)}}&0&0&0&0&0&0&0&0&0&\boldsymbol{b}_1^{\boldsymbol{\left( 2 \right)}}&0\\
0&0&0&0&0&0&0&0&0&0&0&0&1&0\\
0&0&0&0&0&0&0&\boldsymbol{B}_{11}^{\boldsymbol{\left( 1 \right)}}\boldsymbol{B}_{11}^{\boldsymbol{\left( 2 \right)}}&0&0&0&0&0&0\\
0&0&0&0&0&0&0&0&0&0&0&\boldsymbol{B}_{11}^{\boldsymbol{\left( 1 \right)}}&\boldsymbol{b}_1^{\boldsymbol{\left( 1 \right)}}&0\\
0&0&0&0&0&0&0&0&0&0&\boldsymbol{B}_{11}^{\boldsymbol{\left( 2 \right)}}&0&\boldsymbol{b}_1^{\boldsymbol{\left( 2 \right)}}&0\\
0&0&0&0&0&0&0&\boldsymbol{b}_1^{\boldsymbol{\left( 2 \right)}}\boldsymbol{B}_{11}^{\boldsymbol{\left( 1 \right)}}&0&0&\boldsymbol{B}_{11}^{\boldsymbol{\left( 1 \right)}}\boldsymbol{B}_{11}^{\boldsymbol{\left( 2 \right)}}&0&0&0\\
0&0&0&0&0&0&0&\boldsymbol{b}_1^{\boldsymbol{\left( 1 \right)}}\boldsymbol{B}_{11}^{\boldsymbol{\left( 2 \right)}}&0&0&0&\boldsymbol{B}_{11}^{\boldsymbol{\left( 1 \right)}}\boldsymbol{B}_{11}^{\boldsymbol{\left( 2 \right)}}&0&0\\
0&0&0&0&0&0&0&\boldsymbol{B}_{11}^{\boldsymbol{\left( 1 \right)}}&0&0&0&0&0&0\\
0&0&0&0&0&0&0&\boldsymbol{B}_{11}^{\boldsymbol{\left( 2 \right)}}&0&0&0&0&0&0
\end{array}} \right]
\end{equation}
\begin{equation}
\label{Cs}
\boldsymbol{c_s} = \left[ {\begin{array}{*{20}{c}}
0&0&0&0&0&0&0&\left(\boldsymbol{b}_1^{\boldsymbol{\left( 1 \right)}} + \boldsymbol{b}_1^{\boldsymbol{\left( 2 \right)}} - \boldsymbol{b}_1^{\boldsymbol{\left( 1 \right)}}\boldsymbol{b}_1^{\boldsymbol{\left( 2 \right)}}\right)&0&0&\boldsymbol{b}_1^{\boldsymbol{\left( 1 \right)}}&\boldsymbol{b}_1^{\boldsymbol{\left( 2 \right)}}&\boldsymbol{b}_1^{\boldsymbol{\left( 1 \right)}}&\boldsymbol{b}_1^{\boldsymbol{\left( 2 \right)}}
\end{array}} \right]^\top
\end{equation}
\begin{equation}
\label{Cu}
\boldsymbol{c_u} = \left[ {\begin{array}{*{20}{c}}
0&\boldsymbol{b}_1^{\boldsymbol{\left( 2 \right)}}&\boldsymbol{b}_1^{\boldsymbol{\left( 1 \right)}}&0&0&0&0&0&0&0&0&0&0&0
\end{array}} \right]^\top
\end{equation}
\setcounter{equation}{\value{MYtempeqncnt}}
\hrulefill
\vspace*{4pt}
\end{figure*}

In addition to the transition probability matrix, we need to discuss the derivation of the initial probability vector, denoted as $\boldsymbol{\sigma}$, to complete the construction of $\left\{Y_h\right\}$.
The size of $\boldsymbol{\sigma}$ is also up to $M$, which makes it difficult to obtain its expression directly by observation.
Therefore, we construct a discrete-time recurrent Markov chain (RMC) $\left\{W_h\right\}$ to model the DTDQ system purely from the perspective of the cyclic alternation of servers between idle and busy, in contrast to $\left\{Y_h\right\}$ which is carefully designed to track the exact distribution of $\Delta$.
The state space of $\left\{W_h\right\}$ is shown in Table \ref{Table3}, where no transient or absorbing states are contained.
Moreover, Table \ref{Table4} presents the state transition probabilities of $\left\{W_h\right\}$.
By calculating the steady-state probabilities of $\left\{W_h\right\}$, we can further know the probability distribution of the system condition after $P^*$ is generated, i.e., the initial probability vector $\boldsymbol{\sigma}$ of $\left\{Y_h\right\}$.

According to Table \ref{Table1}, it can be found that the initial state of $\left\{Y_h\right\}$ can only be $\left(1,i,j,0\right)$, $\left(3,i,j,0\right)$, $\left(5,i,0,0\right)$, and $\left(6,0,j,0\right)$.
The reason is that when the system is initialized, i.e., when $P^*$ is generated, if there is already a packet in the system, the moment of its generation must be earlier than that of $P^*$.
Moreover, using the policy of $S_1$ priority or $S_2$ priority given in Table \ref{Table2}, there is only one of $\left(5,i,0,0\right)$, and $\left(6,0,j,0\right)$ that will be an initial state of $\left\{Y_h\right\}$.
To connect $\left\{Y_h\right\}$ and $\left\{W_h\right\}$, we next distinguish the transition events of $\left\{W_h\right\}$ that can be equivalent to the initialization of $\left\{Y_h\right\}$.
For the case of $\left(1,i,j,0\right)$, the initialization of $\left\{Y_h\right\}$ can be triggered by the transition from $\left(3,0,j,k-1\right)_\text R$ to $\left(4,i^\prime,j^\prime,0\right)_\text R$ with a probability of $\boldsymbol{B}^{\boldsymbol{\left(2\right)}}_{jj^\prime}\boldsymbol{\alpha} ^{\boldsymbol{\left(1\right)}}_{i^\prime}$ or the transition from $\left(4,i,j,k-1\right)_\text R$ to $\left(4,i^\prime,j^\prime,0\right)_\text R$ with a probability of $\boldsymbol{b}^{\boldsymbol{\left(1\right)}}_{i} \boldsymbol{B} ^{\boldsymbol{\left(2\right)}}_{jj^\prime} \boldsymbol{\alpha} ^{\boldsymbol{\left(1\right)}}_{i^\prime}$.
For the case of $\left(4,i,j,0\right)$, the initialization of $\left\{Y_h\right\}$ can be triggered by the transition from $\left(2,i,0,k-1\right)_\text R$ to $\left(4,i^\prime,j^\prime,0\right)_\text R$ with a probability of $\boldsymbol{B}^{\boldsymbol{\left(1\right)}}_{ii^\prime}\boldsymbol{\alpha} ^{\boldsymbol{\left(2\right)}}_{j^\prime}$ or the transition from $\left(4,i,j,k-1\right)_\text R$ to $\left(4,i^\prime,j^\prime,0\right)_\text R$ with a probability of $\boldsymbol{b}^{\boldsymbol{\left(2\right)}}_{j} \boldsymbol{B} ^{\boldsymbol{\left(1\right)}}_{ii^\prime} \boldsymbol{\alpha} ^{\boldsymbol{\left(2\right)}}_{j^\prime}$.
Note that although there are two events of transition from $\left(4,i,j,k-1\right)_\text R$ to $\left(4,i^\prime,j^\prime,0\right)_\text R$, they have different physical significance, where the one with probability $\boldsymbol{b}^{\boldsymbol{\left(1\right)}}_{i} \boldsymbol{B} ^{\boldsymbol{\left(2\right)}}_{jj^\prime} \boldsymbol{\alpha} ^{\boldsymbol{\left(1\right)}}_{i^\prime}$ and the one with probability $\boldsymbol{b}^{\boldsymbol{\left(2\right)}}_{j} \boldsymbol{B} ^{\boldsymbol{\left(1\right)}}_{ii^\prime} \boldsymbol{\alpha} ^{\boldsymbol{\left(2\right)}}_{j^\prime}$ corresponds to that the newly generated packet is served by $S_1$ and $S_2$, respectively.
We also make a distinction for this in Table \ref{Table4}.
Furthermore, the case of $\left(5,i,0,0\right)$ occurs for the policy of $S_1$ priority, where the initialization of $\left\{Y_h\right\}$ can be triggered by the transition from $\left(1,0,0,k-1\right)_\text R$ to $\left(2,i^\prime,0,0\right)_\text R$ with a probability of $\boldsymbol{\alpha} ^{\boldsymbol{\left(1\right)}}_{i^\prime}$, the transition from $\left(2,i,0,k-1\right)_\text R$ to $\left(2,i^\prime,0,0\right)_\text R$ with a probability of $\boldsymbol{b} ^{\boldsymbol{\left(1\right)}}_{i}\boldsymbol{\alpha} ^{\boldsymbol{\left(1\right)}}_{i^\prime}$, the transition from $\left(3,0,j,k-1\right)_\text R$ to $\left(2,i^\prime,0,0\right)_\text R$ with a probability of $\boldsymbol{b}^{\boldsymbol{\left(2\right)}}_{j}\boldsymbol{\alpha} ^{\boldsymbol{\left(1\right)}}_{i^\prime}$, or the transition from $\left(4,i,j,k-1\right)_\text R$ to $\left(2,i^\prime,0,0\right)_\text R$ with a probability of $\boldsymbol{b}^{\boldsymbol{\left(1\right)}}_{i} \boldsymbol{b} ^{\boldsymbol{\left(2\right)}}_{j} \boldsymbol{\alpha} ^{\boldsymbol{\left(1\right)}}_{i^\prime}$.
Symmetrically, the case of $\left(6,0,j,0\right)$ occurs for the policy of $S_2$ priority, where the initialization of $\left\{Y_h\right\}$ can be triggered by the transition from $\left(1,0,0,k-1\right)_\text R$ to $\left(3,0,j^\prime,0\right)_\text R$ with a probability of $\boldsymbol{\alpha} ^{\boldsymbol{\left(2\right)}}_{j^\prime}$, the transition from $\left(2,i,0,k-1\right)_\text R$ to $\left(3,0,j^\prime,0\right)_\text R$ with a probability of $\boldsymbol{b} ^{\boldsymbol{\left(1\right)}}_{i}\boldsymbol{\alpha} ^{\boldsymbol{\left(2\right)}}_{j^\prime}$, the transition from $\left(3,0,j,k-1\right)_\text R$ to $\left(3,0,j^\prime,0\right)_\text R$ with a probability of $\boldsymbol{b}^{\boldsymbol{\left(2\right)}}_{j}\boldsymbol{\alpha} ^{\boldsymbol{\left(2\right)}}_{j^\prime}$, or the transition from $\left(4,i,j,k-1\right)_\text R$ to $\left(3,0,j^\prime,0\right)_\text R$ with a probability of $\boldsymbol{b}^{\boldsymbol{\left(1\right)}}_{i} \boldsymbol{b} ^{\boldsymbol{\left(2\right)}}_{j} \boldsymbol{\alpha} ^{\boldsymbol{\left(2\right)}}_{j^\prime}$.

\begin{table}
\centering
\caption{State space of the discrete-time RMC $\left\{W_h\right\}$}
\renewcommand{\arraystretch}{1.5}
\label{Table3}
\begin{tabular}{|c|c|}
\hline
State & Description \\
\hline\hline
$\left(1,0,0,l\right)_\text R$ & $S_1$, $S_2$ idle \\
\hline
$\left(2,i,0,l\right)_\text R$ & $P_1$ on $S_1$, $S_2$ idle \\
\hline
$\left(3,0,j,l\right)_\text R$ & $P_2$ on $S_2$, $S_1$ idle \\
\hline
$\left(4,i,j,l\right)_\text R$ & $P_1$ on $S_1$, $P_2$ on $S_2$ \\
\hline
\end{tabular}
\end{table}

On the basis of the above, we can derive the probability of each initial state of $\left\{Y_h\right\}$ and thus $\boldsymbol\sigma$.
Denote the steady-state probabilities of the four classes of states for $\left\{W_h\right\}$ as $p\left(1,0,0,l\right)_\text R$, $p\left(2,i,0,l\right)_\text R$, $p\left(3,0,j,l\right)_\text R$, and $p\left(4,i,j,l\right)_\text R$, respectively.
Then, the probability that $\left\{Y_h\right\}$ starts in $\left(1,i^\prime,j^\prime,0\right)$ is given by
\begin{align}
\label{q1}
q\left(1,i^\prime,j^\prime,0\right) = &\sum\limits_{j = 1}^{{N_2}} {\boldsymbol{B}^{\boldsymbol{\left( 2 \right)}}_{jj^\prime}\boldsymbol{\alpha} ^{\boldsymbol{\left( 1 \right)}}_{i^\prime}p\left( {3,0,j,k-1} \right)_\text R}  \notag \\
&+ \sum\limits_{i = 1}^{{N_1}} {\sum\limits_{j = 1}^{{N_2}} {\boldsymbol{b}^{\boldsymbol{\left( 1 \right)}}_i\boldsymbol{B}^{\boldsymbol{\left( 2 \right)}}_{jj^\prime}\boldsymbol{\alpha} ^{\boldsymbol{\left( 1 \right)}}_{i^\prime}p\left( {4,i,j,k-1} \right)_\text R} },
\tag{8}
\end{align}
where $1\le i^\prime \le N_1$ and $1\le j^\prime \le N_2$.
Similarly, the probability that $\left\{Y_h\right\}$ starts in $\left(4,i^\prime,j^\prime,0\right)$ is given by
\begin{align}
\label{q4}
q\left(4,i^\prime,j^\prime,0\right) = &\sum\limits_{i = 1}^{{N_1}} {\boldsymbol{B}^{\boldsymbol{\left(1\right)}}_{ii^\prime}\boldsymbol{\alpha} ^{\boldsymbol{\left(2\right)}}_{j^\prime}p\left( {2,i,0,k-1} \right)_\text R} \notag \\
&+ \sum\limits_{i = 1}^{{N_1}} {\sum\limits_{j = 1}^{{N_2}} {\boldsymbol{b}^{\boldsymbol{\left(2\right)}}_{j} \boldsymbol{B} ^{\boldsymbol{\left(1\right)}}_{ii^\prime} \boldsymbol{\alpha} ^{\boldsymbol{\left(2\right)}}_{j^\prime}p\left( {4,i,j,k-1} \right)_\text R} }.
\tag{9}
\end{align}
The probability that $\left\{Y_h\right\}$ starts in $\left(5,i^\prime,0,0\right)$ is given by
\begin{align}
\label{q5}
&q\left(5,i^\prime,0,0\right) \notag \\
={} &\boldsymbol{\alpha} ^{\boldsymbol{\left(1\right)}}_{i^\prime}p\left( {1,0,0,k-1} \right)_\text R +\! \sum\limits_{i = 1}^{{N_1}} {\boldsymbol{b} ^{\boldsymbol{\left(1\right)}}_{i}\boldsymbol{\alpha} ^{\boldsymbol{\left(1\right)}}_{i^\prime}p\left( {2,i,0,k-1} \right)_\text R} \notag \\
&+ \sum\limits_{j = 1}^{{N_2}} {\boldsymbol{b} ^{\boldsymbol{\left(2\right)}}_{j}\boldsymbol{\alpha} ^{\boldsymbol{\left(1\right)}}_{i^\prime}p\left( {3,0,j,k-1} \right)_\text R} \notag \\
&+ \sum\limits_{i = 1}^{{N_1}} {\sum\limits_{j = 1}^{{N_2}} {\boldsymbol{b}^{\boldsymbol{\left(1\right)}}_{i} \boldsymbol{b} ^{\boldsymbol{\left(2\right)}}_{j} \boldsymbol{\alpha} ^{\boldsymbol{\left(1\right)}}_{i^\prime}p\left( {4,i,j,k-1} \right)_\text R} } .
\tag{10}
\end{align}
\begin{table}
\centering
\caption{Transition probabilities of the discrete-time RMC $\left\{W_h\right\}$}
\renewcommand{\arraystretch}{1.5}
\label{Table4}
\begin{tabular}{|p{1.284cm}<{\centering}|p{2.892cm}<{\centering}|p{1.063cm}<{\centering}|p{1.816cm}<{\centering}|}
\hline
From & \multicolumn{2}{c|}{To} & Probability \\
\hline\hline
\multirow{3}*{$\left(1,0,0,l\right)_\text R$} & \multicolumn{2}{c|}{$\left(1,0,0,l+1\right)_\text R$ for $l<k-1$} & $1$ \\
\cline{2-4}
~ & $S_1$ priority: $\left(2,i^\prime,0,0\right)_\text R$ & for & $\boldsymbol{\alpha} _{i^\prime}^{\boldsymbol{\left(1\right)}}$ \\
\cline{2-2}\cline{4-4}
~ & $S_2$ priority: $\left(3,0,j^\prime,0\right)_\text R$ & $l=k-1$ & $\boldsymbol{\alpha} _{j^\prime}^{\boldsymbol{\left(2\right)}}$ \\
\hline
\multirow{5}*{$\left(2,i,0,l\right)_\text R$} & \multicolumn{2}{c|}{$\left(1,0,0,l+1\right)_\text R$ for $l<k-1$} & $\boldsymbol{b} _{i}^{\boldsymbol{\left(1\right)}}$ \\
\cline{2-4}
~ & $S_1$ priority: $\left(2,i^\prime,0,0\right)_\text R$ & for & $\boldsymbol{b} _{i}^{\boldsymbol{\left(1\right)}}\boldsymbol{\alpha} _{i^\prime}^{\boldsymbol{\left(1\right)}}$ \\
\cline{2-2}\cline{4-4}
~ & $S_2$ priority: $\left(3,0,j^\prime,0\right)_\text R$ & $l=k-1$ & $\boldsymbol{b} _{i}^{\boldsymbol{\left(1\right)}}\boldsymbol{\alpha} _{j^\prime}^{\boldsymbol{\left(2\right)}}$ \\
\cline{2-4}
~ & \multicolumn{2}{c|}{$\left(2,i^\prime,0,l+1\right)_\text R$ for $l<k-1$} & $\boldsymbol{B} _{ii^\prime}^{\boldsymbol{\left(1\right)}}$ \\
\cline{2-4}
~ & \multicolumn{2}{c|}{$\left(4,i^\prime,j^\prime,0\right)_\text R$ for $l=k-1$} & $\boldsymbol{B} _{ii^\prime}^{\boldsymbol{\left(1\right)}}\boldsymbol{\alpha} _{j^\prime}^{\boldsymbol{\left(2\right)}}$ \\
\hline
\multirow{5}*{$\left(3,0,j,l\right)_\text R$} & \multicolumn{2}{c|}{$\left(1,0,0,l+1\right)_\text R$ for $l<k-1$} & $\boldsymbol{b} _{j}^{\boldsymbol{\left(2\right)}}$ \\
\cline{2-4}
~ & $S_1$ priority: $\left(2,i^\prime,0,0\right)_\text R$ & for & $\boldsymbol{b} _{j}^{\boldsymbol{\left(2\right)}}\boldsymbol{\alpha} _{i^\prime}^{\boldsymbol{\left(1\right)}}$ \\
\cline{2-2}\cline{4-4}
~ & $S_2$ priority: $\left(3,0,j^\prime,0\right)_\text R$ & $l=k-1$ & $\boldsymbol{b} _{j}^{\boldsymbol{\left(2\right)}}\boldsymbol{\alpha} _{j^\prime}^{\boldsymbol{\left(2\right)}}$ \\
\cline{2-4}
~ & \multicolumn{2}{c|}{$\left(3,0,j^\prime,l+1\right)_\text R$ for $l<k-1$} & $\boldsymbol{B} _{jj^\prime}^{\boldsymbol{\left(2\right)}}$ \\
\cline{2-4}
~ & \multicolumn{2}{c|}{$\left(4,i^\prime,j^\prime,0\right)_\text R$ for $l=k-1$} & $\boldsymbol{B} _{jj^\prime}^{\boldsymbol{\left(2\right)}}\boldsymbol{\alpha} _{i^\prime}^{\boldsymbol{\left(1\right)}}$ \\
\hline
\multirow{9}*{$\left(4,i,j,l\right)_\text R$} & \multicolumn{2}{c|}{$\left(3,0,j^\prime,l+1\right)_\text R$ for $l<k-1$} & $\boldsymbol{b} _{i}^{\boldsymbol{\left(1\right)}}\boldsymbol{B} _{jj^\prime}^{\boldsymbol{\left(2\right)}}\boldsymbol{\alpha} _{i^\prime}^{\boldsymbol{\left(1\right)}}$ \\
\cline{2-4}
~ & \multicolumn{2}{c|}{$\left(4,i^\prime,j^\prime,0\right)_\text R$ for $l=k-1$} & $\boldsymbol{b} _{i}^{\left(1\right)}\boldsymbol{B} _{jj^\prime}^{\boldsymbol{\left(2\right)}}\boldsymbol{\alpha} _{i^\prime}^{\boldsymbol{\left(1\right)}}$ \\
\cline{2-4}
~ & \multicolumn{2}{c|}{$\left(1,0,0,l+1\right)_\text R$ for $l<k-1$} & $\boldsymbol{b} _{i}^{\boldsymbol{\left(1\right)}}\boldsymbol{b} _{j}^{\boldsymbol{\left(2\right)}}$ \\
\cline{2-4}
~ & $S_1$ priority: $\left(2,i^\prime,0,0\right)_\text R$ & for & $\boldsymbol{b} _{i}^{\boldsymbol{\left(1\right)}}\boldsymbol{b} _{j}^{\boldsymbol{\left(2\right)}}\boldsymbol{\alpha} _{i^\prime}^{\boldsymbol{\left(1\right)}}$ \\
\cline{2-2}\cline{4-4}
~ & $S_2$ priority: $\left(3,0,j^\prime,0\right)_\text R$ & $l=k-1$ & $\boldsymbol{b} _{i}^{\boldsymbol{\left(1\right)}}\boldsymbol{b} _{j}^{\boldsymbol{\left(2\right)}}\boldsymbol{\alpha} _{j^\prime}^{\boldsymbol{\left(2\right)}}$ \\
\cline{2-4}
~ & \multicolumn{2}{c|}{$\left(2,i^\prime,0,l+1\right)_\text R$ for $l<k-1$} & $\boldsymbol{b} _{j}^{\boldsymbol{\left(2\right)}}\boldsymbol{B} _{ii^\prime}^{\boldsymbol{\left(1\right)}}$ \\
\cline{2-4}
~ & \multicolumn{2}{c|}{$\left(4,i^\prime,j^\prime,0\right)_\text R$ for $l=k-1$} & $\boldsymbol{b} _{j}^{\boldsymbol{\left(2\right)}}\boldsymbol{B} _{ii^\prime}^{\boldsymbol{\left(1\right)}}\boldsymbol{\alpha} _{j^\prime}^{\boldsymbol{\left(2\right)}}$ \\
\cline{2-4}
~ & \multicolumn{2}{c|}{$\left(4,i^\prime,j^\prime,l+1\right)_\text R$ for $l<k-1$} & \multirow{2}*{$\boldsymbol{B} _{ii^\prime}^{\boldsymbol{\left(1\right)}}\boldsymbol{B} _{jj^\prime}^{\boldsymbol{\left(2\right)}}$} \\
\cline{2-3}
~ & \multicolumn{2}{c|}{$\left(4,i^\prime,j^\prime,k-1\right)_\text R$ for $l=k-1$} & ~ \\
\hline
\end{tabular}
\end{table}
The probability that $\left\{Y_h\right\}$ starts in $\left(6,0,j^\prime,0\right)$ is given by
\begin{align}
\label{q6}
&q\left(6,0,j^\prime,0\right) \notag \\
= {} &\boldsymbol{\alpha} ^{\boldsymbol{\left(2\right)}}_{j^\prime}p\left( {1,0,0,k-1} \right)_\text R + \sum\limits_{i = 1}^{{N_1}} {\boldsymbol{b} ^{\boldsymbol{\left(1\right)}}_{i}\boldsymbol{\alpha} ^{\boldsymbol{\left(2\right)}}_{j^\prime}p\left( {2,i,0,k-1} \right)_\text R} \notag \\
&+ \sum\limits_{j = 1}^{{N_2}} {\boldsymbol{b} ^{\boldsymbol{\left(2\right)}}_{j}\boldsymbol{\alpha} ^{\boldsymbol{\left(2\right)}}_{j^\prime}p\left( {3,0,j,k-1} \right)_\text R} \notag \\
&+ \sum\limits_{i = 1}^{{N_1}} {\sum\limits_{j = 1}^{{N_2}} {\boldsymbol{b}^{\boldsymbol{\left(1\right)}}_{i} \boldsymbol{b} ^{\boldsymbol{\left(2\right)}}_{j} \boldsymbol{\alpha} ^{\boldsymbol{\left(2\right)}}_{j^\prime}p\left( {4,i,j,k-1} \right)_\text R} }.
\tag{11}
\end{align}
To obtain $\boldsymbol\sigma$, we need to further perform a normalization of (\ref{q1}), (\ref{q4}), (\ref{q5}), and (\ref{q6}).
For the example of $S_1$ priority, $\sigma \left( {1,i^\prime,j^\prime,0} \right)$ is equal to the following expression,
\begin{align}
\label{sigma1}
{}&\frac{q\left(1,i^\prime,j^\prime,0\right)}{\sum\nolimits_{i^\prime =1}^{{N_1}} {\sum\nolimits_{j^\prime=1}^{{N_2}} \left( {q\left(1,i^\prime,j^\prime,0\right) + q\left(4,i^\prime,j^\prime,0\right) + q\left(5,i^\prime,0,0\right)} \right)  } },
\tag{12}
\end{align}
which is the element of $\boldsymbol\sigma$ corresponding to $\left(1,i^\prime,j^\prime,0\right)$.
Through the same method, $\sigma \left( {4,i^\prime,j^\prime,0} \right)$ and $\sigma \left( {5,i^\prime,0,0} \right)$ can also be derived.
Except for these three classes of elements, all other elements of $\boldsymbol\sigma$ are zero.
Again using $M=14$ as an example, (\ref{sigma}) shows the exact expression of $\boldsymbol\sigma$, where the non-zero elements are determined by (\ref{q1M14}), (\ref{q4M14}), and (\ref{q5M14}).
For the policy of $S_2$ priority, $q\left( {5,i^\prime,0,0} \right)$ in (\ref{sigma1}) needs to be replaced by $q\left( {6,0,j^\prime,0} \right)$.
Also, the three classes of non-zero elements are $\sigma \left( {1,i^\prime,j^\prime,0} \right)$, $\sigma \left( {4,i^\prime,j^\prime,0} \right)$, and $\sigma \left( {6,0,j^\prime,0} \right)$.

\section{Exact Distributions of AoI and PAoI for DTDQ Status Update Systems}\label{AoIderivation}
Based on the constructed discrete-time AMC, we further derive the PMF of the steady-state AoI $\Delta$.
Without loss of generality, we still choose Cycle $n$ in Fig.~\ref{AoI_evolution} as the object of study.
Recall that in Fig.~\ref{AoI_evolution}, the red portion corresponds to non-$P^*$ transient states ($\left(7,0,0,l\right)$, $\left(8,i,j\right)$, $\left(9,i,0,l\right)$, $\left(10,0,j,l\right)$, $\left(11,i,j,l\right)$, $\left(12,i,j,l\right)$, $\left(13,i,0,l\right)$, and $\left(14,0,j,l\right)$) of $\left\{Y_h\right\}$.
Denote the subset composed of these non-$P^*$ transient states, i.e., Subset 2, as $\mathcal{C}$.
For Cycle $n$, $\left\{Y_h\right\}$ is initiated at the instant $g_{n-1}$.
The exact expression of the PMF ${u_\Delta }\left( h \right)$ is presented in Theorem~\ref{Theorem1}.

\begin{figure*}[!t]
\normalsize
\setcounter{MYtempeqncnt}{\value{equation}}
\setcounter{equation}{12}
\begin{equation}
\label{sigma}
\boldsymbol\sigma  = \left[ {\begin{array}{*{20}{c}}
{\sigma \left( {1,1,1,0} \right)}&0&0&{\sigma \left( {4,1,1,0} \right)}&{\sigma \left( {5,1,0,0} \right)}&0&0&0&0&0&0&0&0&0
\end{array}} \right]
\end{equation}
\begin{equation}
\label{q1M14}
q\left( {1,1,1,0} \right) = \boldsymbol{B}_{11}^{\boldsymbol{\left( 2 \right)}}\boldsymbol{\alpha} _1^{\boldsymbol{\left( 1 \right)}}p{\left( {3,0,1,0} \right)_\text R} + \boldsymbol{b}_1^{\boldsymbol{\left( 1 \right)}}\boldsymbol{B}_{11}^{\boldsymbol{\left( 2 \right)}}\boldsymbol{\alpha} _1^{\boldsymbol{\left( 1 \right)}}p{\left( {4,1,1,0} \right)_\text R}
\end{equation}
\begin{equation}
\label{q4M14}
q\left( {4,1,1,0} \right) = \boldsymbol{B}_{11}^{\boldsymbol{\left( 1 \right)}}\boldsymbol{\alpha} _1^{\boldsymbol{\left( 2 \right)}}p{\left( {2,1,0,0} \right)_\text R} + \boldsymbol{b}_1^{\boldsymbol{\left( 2 \right)}}\boldsymbol{B}_{11}^{\boldsymbol{\left( 1 \right)}}\boldsymbol{\alpha} _1^{\boldsymbol{\left( 2 \right)}}p{\left( {4,1,1,0} \right)_\text R}
\end{equation}
\begin{equation}
\label{q5M14}
q\left( {5,1,0,0} \right)=
\boldsymbol{\alpha} _1^{\boldsymbol{\left( 1 \right)}}p{\left( {1,0,0,0} \right)_\text R} + b_1^{\left( 1 \right)}\alpha _1^{\left( 1 \right)}p{\left( {2,1,0,0} \right)_\text R} + \boldsymbol{b}_1^{\boldsymbol{\left( 2 \right)}}\boldsymbol{\alpha} _1^{\boldsymbol{\left( 1 \right)}}p{\left( {3,0,1,0} \right)_\text R} + \boldsymbol{b}_1^{\boldsymbol{\left( 1 \right)}}\boldsymbol{b}_1^{\boldsymbol{\left( 2 \right)}}\boldsymbol{\alpha} _1^{\boldsymbol{\left( 1 \right)}}p{\left( {4,1,1,0} \right)_\text R}
\end{equation}
\setcounter{equation}{\value{MYtempeqncnt}}
\hrulefill
\vspace*{4pt}
\end{figure*}

\begin{thm}\label{Theorem1}
From the perspective of discrete-time AMC modeling, the PMF of the steady-state AoI at a given value $h$ is proportional to the probability that $\left\{Y_h\right\}$ is visiting a non-$P^*$ transient state after $h$ slots since its initialization, 
\begin{align}
\label{Thm1}
{u_\Delta }\left( h \right) \propto \Pr \left( Y_h \in \mathcal{C} \right).
\tag{17}
\end{align}
Incorporating the transition matrix and the initial vector of $\left\{Y_h\right\}$, the exact expression of ${u_\Delta }\left( h \right)$ is further given by
\begin{align}
\label{Thm2}
{u_\Delta }\left( h \right)  = \frac{\boldsymbol{\sigma}\boldsymbol{A}^h\boldsymbol{v}}{{\boldsymbol{\sigma} \boldsymbol{A}{{\left( {\boldsymbol{I} - \boldsymbol{A}} \right)}^{ - 1}}\boldsymbol{v}}}.
\tag{18}
\end{align}
where $\boldsymbol{v}$ is a $M \times 1$ vector whose elements corresponding to non-$P^*$ states are one and all other elements are zero.\footnote{Similar to (\ref{QforM14}) and (\ref{sigma}), taking the case of $M=14$ as an example, we have $\boldsymbol{v} = \left[ {\begin{array}{*{20}{c}}
0&0&0&0&0&0&1&1&1&1&1&1&1&1
\end{array}} \right]^\top$.}
\end{thm}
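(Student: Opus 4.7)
My plan is to establish (17) by a pathwise argument linking the AoI process to the AMC $\{Y_h\}$, and then to deduce (18) from (17) by routine matrix manipulations. The key identification, which I would prove first, is that if one launches an independent copy of $\{Y_h\}$ at every packet generation epoch (with initial state equal to the instantaneous system configuration at that instant), then the trajectory from the packet generated at wall-clock time $g$ visits the non-$P^*$ subset $\mathcal{C}$ at step $h$ if and only if (i) the tagged packet is up-to-date rather than obsolete, (ii) it has been received by time $g+h$, and (iii) no subsequent up-to-date packet has been received by time $g+h$. These three conditions are exactly what is required for $g+h$ to lie in the cycle started by the tagged packet, so they are equivalent to $\Delta(g+h) = h$. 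This identification follows mechanically from the state-space design in Table~\ref{Table1} and the transitions in Table~\ref{Table2}; the only real work is verifying that every relevant configuration has been captured by the fourteen transient-state classes.

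Second, I would apply ergodic averaging. Summing the indicator of $\{\Delta(t)=h\}$ over a long horizon $[0,K)$ and dividing by $K$ yields $u_\Delta(h)$ in the limit; by the bijection above, the same quantity equals the long-run packet-generation rate $\lambda_{\mathrm{gen}}$ multiplied by the limiting fraction of tagged packets whose trajectory sits in $\mathcal{C}$ at step $h$, i.e., $\lambda_{\mathrm{gen}}\,\Pr(Y_h \in \mathcal{C})$ under the initial distribution $\boldsymbol{\sigma}$. That the steady-state distribution of the system configuration at packet generation epochs is exactly $\boldsymbol{\sigma}$ is what the auxiliary RMC $\{W_h\}$ and formulas (8)--(12) are engineered to deliver. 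The most delicate point here is that $\boldsymbol{\sigma}$ averages over \emph{all} packet generations, including those that turn out to be obsolete; this is harmless because an obsolete tagged packet absorbs into state~$16$ before its trajectory ever reaches $\mathcal{C}$, so it contributes zero to $\Pr(Y_h \in \mathcal{C})$ for every $h$ and merely rescales the normalizing constant. Since $\lambda_{\mathrm{gen}}$ does not depend on $h$, this establishes (17).

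Finally, I would translate (17) into (18) by elementary matrix manipulations. Because $\boldsymbol{A}$ is the transient block of the AMC transition matrix in (\ref{TransitionMatrix}), the unnormalized distribution over transient states after $h$ steps is $\boldsymbol{\sigma}\boldsymbol{A}^h$, so $\Pr(Y_h \in \mathcal{C}) = \boldsymbol{\sigma}\boldsymbol{A}^h\boldsymbol{v}$ by the definition of $\boldsymbol{v}$. The normalizing constant is fixed by $\sum_{h\ge 0} u_\Delta(h) = 1$, which gives
\begin{equation*}
\sum_{h=0}^{\infty} \boldsymbol{\sigma}\boldsymbol{A}^h\boldsymbol{v} \;=\; \boldsymbol{\sigma}(\boldsymbol{I}-\boldsymbol{A})^{-1}\boldsymbol{v},
\end{equation*}
with convergence guaranteed because $\boldsymbol{A}$ is sub-stochastic. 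Since every state supported by $\boldsymbol{\sigma}$ lies in Subset~1, one has $\boldsymbol{\sigma}\boldsymbol{v}=0$, so the $h=0$ term vanishes and the sum reduces to $\boldsymbol{\sigma}\boldsymbol{A}(\boldsymbol{I}-\boldsymbol{A})^{-1}\boldsymbol{v}$, which is exactly the denominator in (18). Dividing $\boldsymbol{\sigma}\boldsymbol{A}^h\boldsymbol{v}$ by this quantity yields (18) and completes the proof.
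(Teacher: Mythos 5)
Your proposal is correct and follows essentially the same route as the paper: a sample-path identification of each AoI cycle with the sojourn of $\left\{Y_h\right\}$ in $\mathcal{C}$, giving the proportionality (\ref{Thm1}), followed by the geometric-series normalization $\sum_h \boldsymbol{\sigma}\boldsymbol{A}^h\boldsymbol{v} = \boldsymbol{\sigma}\boldsymbol{A}(\boldsymbol{I}-\boldsymbol{A})^{-1}\boldsymbol{v}$ to obtain (\ref{Thm2}). You actually make explicit two points the paper leaves implicit — that tagged packets which turn out obsolete are absorbed into state $16$ and so contribute nothing to $\Pr(Y_h\in\mathcal{C})$, and that $\boldsymbol{\sigma}\boldsymbol{v}=0$ reconciles the $h=0$ and $h=1$ starting indices of the normalizing sum — but these are refinements of the same argument, not a different one.
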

\begin{proof}
For a given cycle of AoI, such as Cycle $n$ in Fig.~\ref{AoI_evolution}, the start and end time points are the receptions of the $(n-1)$st and $n$th up-to-date packets, respectively.
On the other hand, the discrete-time AMC $\{Y_h\}$ is initiated at $h=0$ with the generation of the $n-1$st up-to-date packet, the operation of which continues until the reception of the $n$th up-to-date packet, i.e., the absorption into the successful absorbing state $15$.
After the reception of the $(n-1)$st up-to-date packet, the AMC $\{Y_h\}$ is either visiting the subset $\mathcal{C}$ composed of the non-$P^*$ states or is absorbed.
Then, using sample path arguments, there is an AMC cycle corresponding to each AoI cycle, while the part of AMC cycle spent in $\mathcal{C}$ overlaps the AoI cycle.
Furthermore, a state in $\mathcal{C}$ is to be visited in the AMC cycle at time $h$ if the AoI value $h$ is visited in the AoI cycle.
Consequently, the PMF ${u_\Delta }\left( h \right)$ is equal to the probability that $Y_h \in \mathcal{C}$ divided by a fixed scale factor $\kappa_1$, ensuring that $\sum\nolimits_{h=1}^\infty {{u_\Delta }\left( h \right)}  = 1$.
That is,
\begin{align}
\label{proof1}
{u_\Delta }\left( h \right) = \kappa_1 \Pr \left(Y_h \in \mathcal{C} \right),
\tag{19}
\end{align}
which yields (\ref{Thm1}).
Next, we further perform normalization on $\Pr \left(Y_h \in \mathcal{C} \right)$ to obtain $\kappa_1$.
According to the definition of the discrete-time AMC, the probability that $\left\{Y_h\right\}$ is visiting a state belonging to $\mathcal{C}$ after $h$ steps is given by
\begin{align}
\label{proof2}
\Pr \left( {{Y_h} \in \mathcal C} \right) = \boldsymbol{\sigma}\boldsymbol{A}^h\boldsymbol{v}.
\tag{20}
\end{align}
In discrete-time systems, the minimum of AoI is $1$. 
Then, the range $h=1,2,\ldots$ is used to normalize $\Pr \left(Y_h \in \mathcal{C} \right)$.
Based on (\ref{proof2}), we have
\begin{align}
\label{proof3}
\sum\limits_{h = 1}^\infty  {\Pr \left( {{Y_h} \in \mathcal C} \right)} = \boldsymbol{\sigma} \sum\limits_{h = 1}^\infty  {{\boldsymbol{A}^h}} \boldsymbol{v} = \boldsymbol{\sigma} \boldsymbol{A}{{\left( {\boldsymbol{I} - \boldsymbol{A}} \right)}^{ - 1}}\boldsymbol{v}.
\tag{21}
\end{align}
Moreover, combining (\ref{proof1}), (\ref{proof3}), and $\sum\nolimits_h {{u_\Delta }\left( h \right)}  = 1$, it can be obtained that
\begin{align}
\label{proof4}
\kappa_1  = \frac{1}{\sum\nolimits_{h = 1}^\infty  {\Pr \left( {{Y_h} \in \mathcal C} \right)}} = \frac{1}{\boldsymbol{\sigma} \boldsymbol{A}{{\left( {\boldsymbol{I} - \boldsymbol{A}} \right)}^{ - 1}}\boldsymbol{v}}.
\tag{22}
\end{align}
By substituting (\ref{proof2}) and (\ref{proof4}) into (\ref{proof1}), (\ref{Thm2}) follows.
This completes the proof.
\end{proof}

Note that in order to model the distribution of steady-state AoI, in Theorem \ref{Theorem1} we focus only on the case where $\left\{Y_h\right\}$ is absorbed into the state $15$.
This is sufficient since the absorption of $\left\{Y_h\right\}$ into the state $16$ will not be shown in Fig.~\ref{AoI_evolution}.
Using the PMF given by (\ref{Thm2}), the average AoI of the investigated DTDQ system can be presented as
\begin{align}
\label{AoIbar}
\mathbb E\left[ \Delta\right] = \sum\limits_{h = 1}^\infty  {h{u_\Delta }\left( h \right)} = \frac{\boldsymbol{\sigma}\boldsymbol{A}{{\left( {\boldsymbol{I} - \boldsymbol{A}} \right)}^{ - 2}}\boldsymbol{v}}{{\boldsymbol{\sigma} \boldsymbol{A}{{\left( {\boldsymbol{I} - \boldsymbol{A}} \right)}^{ - 1}}\boldsymbol{v}}}.
\tag{23}
\end{align}
In addition, the average squared AoI is given by
\begin{align}
\label{AoI2bar}
\mathbb E\left[ \Delta^2\right] = \sum\limits_{h = 1}^\infty  {h^2{u_\Delta }\left( h \right)} = \frac{\boldsymbol{\sigma}\boldsymbol{A}{{\left( {\boldsymbol{I} - \boldsymbol{A}} \right)}^{ - 3}}\left({\boldsymbol{A}} + \boldsymbol{I} \right)\boldsymbol{v}}{{\boldsymbol{\sigma} \boldsymbol{A}{{\left( {\boldsymbol{I} - \boldsymbol{A}} \right)}^{ - 1}}\boldsymbol{v}}}.
\tag{24}
\end{align}
Besides $\mathbb E\left[ \Delta\right]$ and $\mathbb E\left[ \Delta^2\right]$, the exact expression for any other order moment of $\Delta$ can also be obtained by using the PMF derived in Theorem \ref{Theorem1}.

Similarly, we derive the PMF of PAoI from the perspective of discrete-time AMC modeling.
It can be seen in Fig.~\ref{AoI_evolution} that for Cycle $n$,
$\Phi_n$ is equal to the total time that $\left\{Y_h\right\}$ has experienced minus $1$ when it is absorbed into the state $15$.
Then, the PMF of the steady-state PAoI at a given value $h$ is denoted as
\begin{align}
\label{Phi}
{u_\Phi }\left( h \right) = {}& U_\Phi\left(h\right)-U_\Phi\left(h-1\right) \notag \\ 
={}&\kappa_2\left(\Pr \left( {{Y_{h+1}} = 15} \right)-\Pr \left( {{Y_h} = 15} \right)\right) \notag \\
= {}& \kappa_2\boldsymbol {\sigma}\boldsymbol{A}^{h}\boldsymbol{c_s},
\tag{25}
\end{align}
where $U_\Phi\left(h\right)=\kappa_2\Pr \left( {{Y_{h+1}} = 15} \right)=\kappa_2\boldsymbol {\sigma}{\sum\nolimits_{\xi = 1}^h  {\boldsymbol{A}^{\xi}}}\boldsymbol{c_s}$ is the cumulative distribution function of $\Phi$ and $\kappa_2$ is a fixed scale factor to ensure that $\lim_{h \rightarrow \infty} U_\Phi\left(h\right)=1$, i.e., $\sum\nolimits_{h=1}^\infty {{u_\Phi }\left( h \right)}  = 1$.
Based on (\ref{Phi}), we have
\begin{align}
\label{Kappa1}
\kappa_2  = \frac{1}{{\boldsymbol{\sigma} \sum\nolimits_{h = 1}^\infty  {{\boldsymbol{A}^h}} \boldsymbol{c_s}}} =
\frac{1}{\boldsymbol{\sigma} \boldsymbol{A}{{\left( {\boldsymbol{I} - \boldsymbol{A}} \right)}^{ - 1}}\boldsymbol{c_s}}.
\tag{26}
\end{align}
By substituting (\ref{Kappa1}) into (\ref{Phi}), the exact expression of ${u_\Phi }\left( h \right)$ is obtained as
\begin{align}
\label{Phi1}
{u_\Phi }\left( h \right)  = \frac{\boldsymbol{\sigma}\boldsymbol{A}^h\boldsymbol{c_s}}{{\boldsymbol{\sigma} \boldsymbol{A}{{\left( {\boldsymbol{I} - \boldsymbol{A}} \right)}^{ - 1}}\boldsymbol{c_s}}}.
\tag{27}
\end{align}
We can obtain an arbitrary order moment of $\Phi$ on the basis of (\ref{Phi1}).
In particular, the average PAoI is given by
\begin{align}
\label{PAoIbar}
\mathbb E\left[ \Phi\right] = \sum\limits_{h = 1}^\infty  {h{u_\Phi }\left( h \right)} = \frac{\boldsymbol{\sigma}\boldsymbol{A}{{\left( {\boldsymbol{I} - \boldsymbol{A}} \right)}^{ - 2}}\boldsymbol{c_s}}{{\boldsymbol{\sigma} \boldsymbol{A}{{\left( {\boldsymbol{I} - \boldsymbol{A}} \right)}^{ - 1}}\boldsymbol{c_s}}},
\tag{28}
\end{align}
which, as an important complement to $\mathbb E\left[ \Delta\right]$, can help to better understand the timeliness performance of the DTDQ system.
\section{Numerical Results}\label{Results}
In this section, we first provide numerical examples to demonstrate the validation of the constructed discrete-time AMC model.
Subsequently, numerical examples revealing the optimum freezing parameter that minimizes the average AoI are presented using geometrically, uniformly, and triangularly distributed service times.
The DPH-type representations of uniform and triangular distributions are obtained based on (\ref{DPHrepresentation}).
In addition, we also present several examples showing the reduction in average AoI due to freezing in cases where the service time distributions of the two servers are non-identical.

\subsection{Validation of the AMC Model}\label{Validation}
In Figs.~\ref{AverageAoI_ET}-\ref{AveragePAoI_ET}, we present the results obtained with the analytical (or theoretical) model and simulations, for the average AoI, the average squared AoI, and the average PAoI (as a function of the mean service time) for which the theoretical results are obtained from (\ref{AoIbar}), (\ref{AoI2bar}), and (\ref{PAoIbar}), respectively, given in Section \ref{AoIderivation}.
The simulation results are the output of Monte Carlo simulations of the DTDQ system for $4 \times 10^6$ slots.
Identically distributed service times are considered and two well-known DPH-type distributions, geometric and uniform, are used to model the service time of $S_m, m=1,2$.

\begin{figure}
\centering
\includegraphics[width=0.47\textwidth]{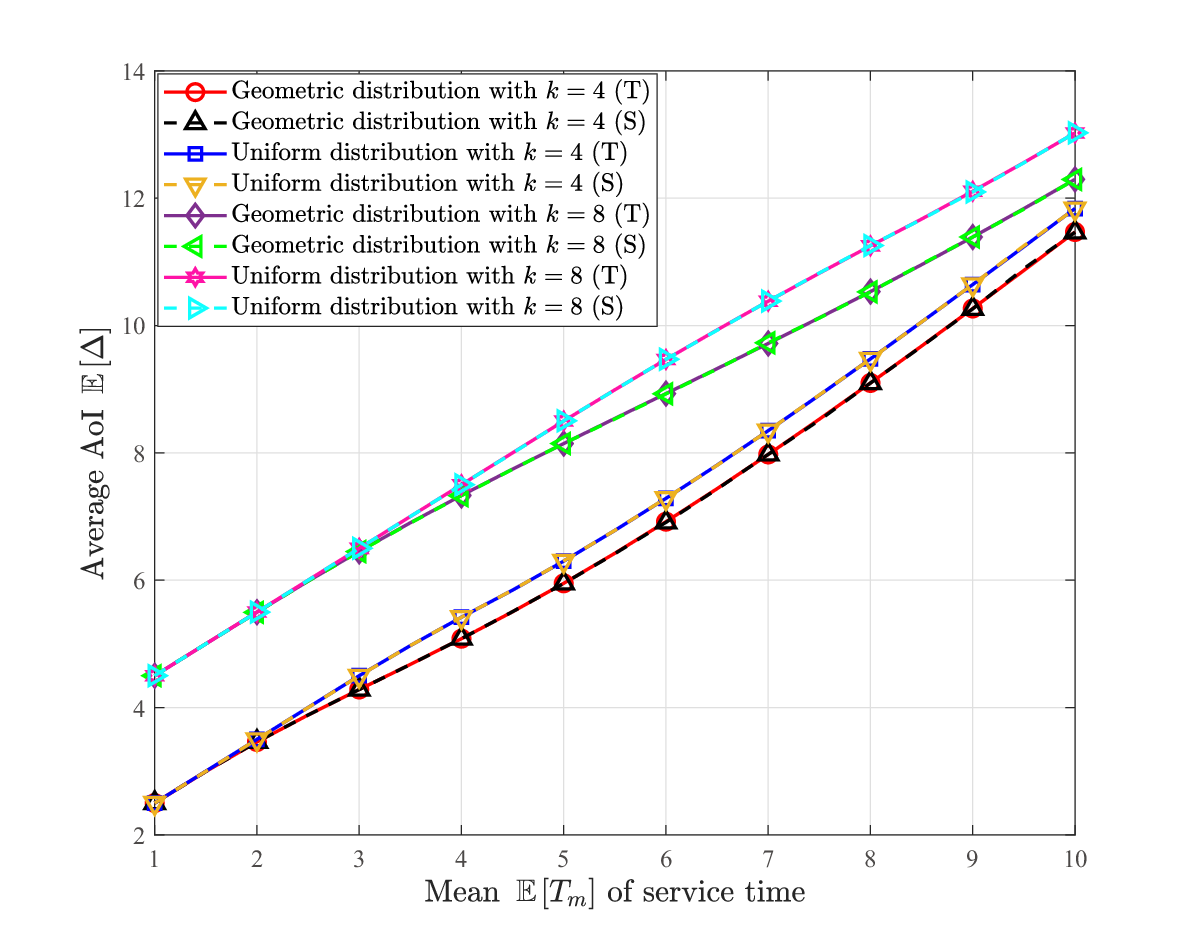}
\caption{Average AoI with respect to (wrt) the mean service time for geometric and uniform distributions. The service times $T_1$ and $T_2$ are identically distributed. (T) and (S) denote  the theoretical and simulation results, respectively.}
\label{AverageAoI_ET}
\end{figure}

\begin{figure}
\centering
\includegraphics[width=0.47\textwidth]{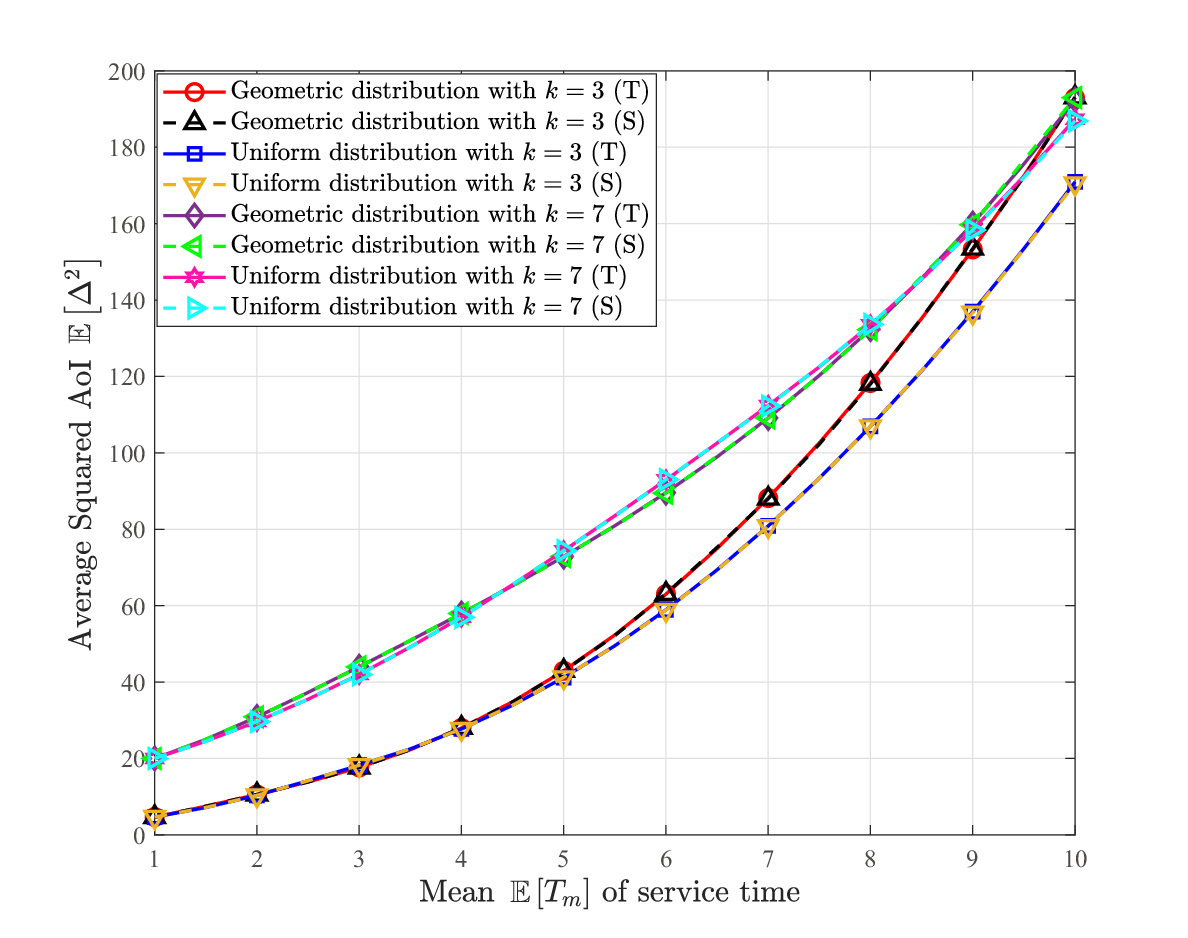}
\caption{Average aquared AoI wrt the mean service time for geometric and uniform distributions. The service times $T_1$ and $T_2$ are identically distributed. (T) and (S) denote the theoretical and simulation results, respectively.}
\label{AverageSquaredAoI_ET}
\end{figure}

In can be seen from Figs. \ref{AverageAoI_ET}-\ref{AveragePAoI_ET} that for different timeliness metrics and varying freezing parameters, the theoretical results always matched the simulation results very well, validating the correctness of Theorem~\ref{Theorem1} and the effectiveness of the constructed discrete-time AMC model.
In addition, we observe that both the average AoI and the average PAoI increase with an increase in the mean service time.
This is intuitively easy to understand, as longer service times mean that status update packets are less fresh when they reach the monitor, leading to worsened information timeliness.
We also observe that for $\mathbb E \left[T_m\right] = 1$, the curves of the geometric distribution and uniform distribution coincide exactly.
The reason is that at such a point, both distributions degenerate into the deterministic distribution located at one.
Focusing on the average AoI, we find that the performance gap between the geometric and uniform distributions is affected by the freezing parameter.
In Fig.~\ref{AverageAoI_ET}, for both $k=4$ and $k=8$, the average AoI corresponding to geometrically distributed service times is smaller than that corresponding to uniformly distributed service times.
However, it can be deduced from Fig.~\ref{AverageSquaredAoI_ET} that for $k = 3$, the average AoI corresponding to the geometric distribution is larger, while for $k = 7$, the average AoIs corresponding to the two distributions overlap several times.
Although Fig.~\ref{AverageAoI_ET} or Fig.~\ref{AverageSquaredAoI_ET} in isolation might suggest that a larger freezing parameter implies a larger average AoI, this is not actually always the case.
A counterexample is that for geometrically distributed service times with $\mathbb E \left[T_m\right] = 10$, the average AoI given in Fig.~\ref{AverageAoI_ET} is less than $13$ when $k=8$, but the average squared AoI shown in Fig.~\ref{AverageSquaredAoI_ET} is significantly larger than $169$ when $k=7$.
We leave the detailed discussion of the optimum freezing parameter for the next subsection.
Having demonstrated the effectiveness of the constructed discrete-time AMC model, from now on, we will resort to the theoretical results only, in the remaining numerical examples.

\begin{figure}
\centering
\includegraphics[width=0.47\textwidth]{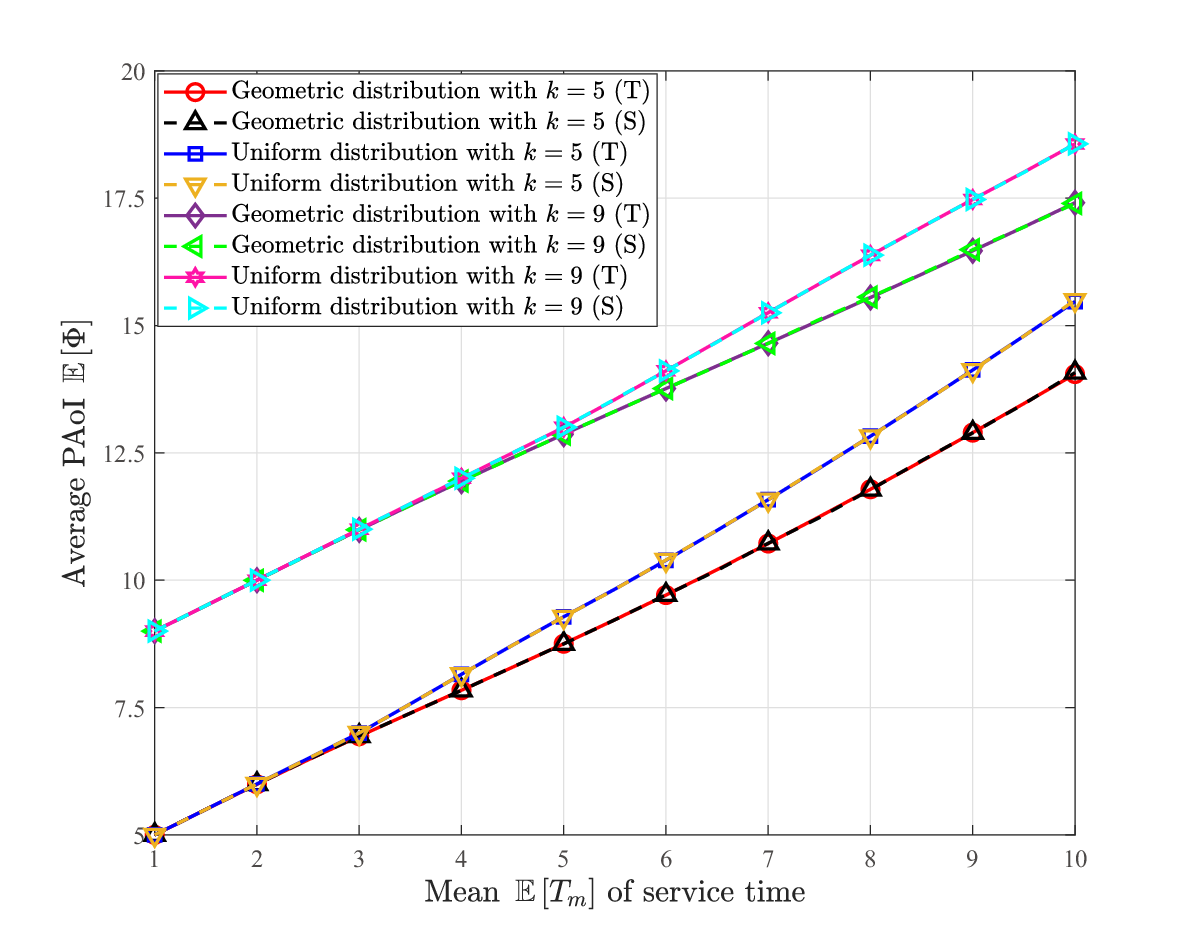}
\caption{Average PAoI versus mean service time for geometric and uniform distributions. The service times $T_1$ and $T_2$ are identically distributed. (T) and (S) denote theoretical and simulation results, respectively.}
\label{AveragePAoI_ET}
\end{figure}
\subsection{Optimum Freezing Parameter wrt Average AoI}\label{OptimumFreezing}
\begin{figure*}
\centering
\subfloat[]{
\includegraphics[width = 0.315\textwidth]{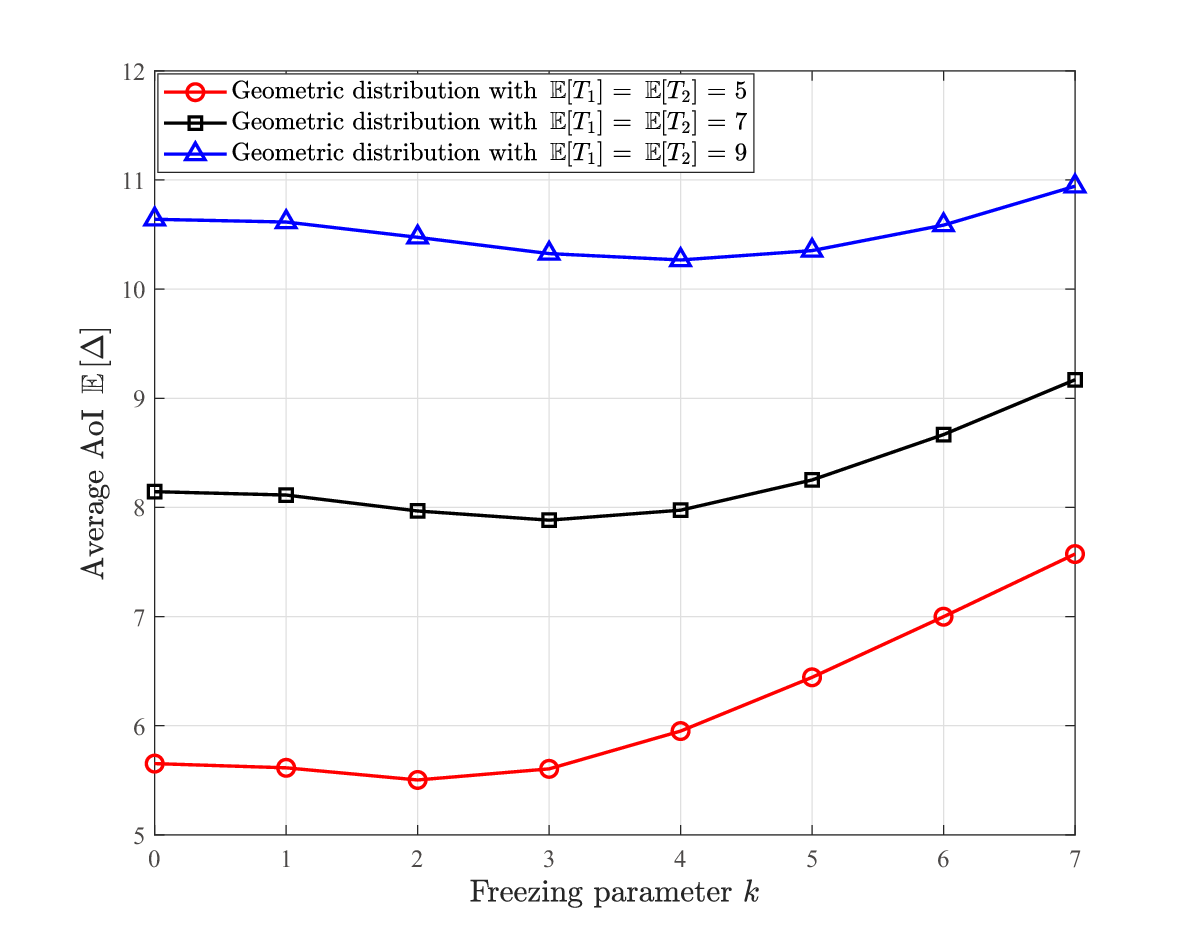}}
\subfloat[]{
\includegraphics[width = 0.322\textwidth]{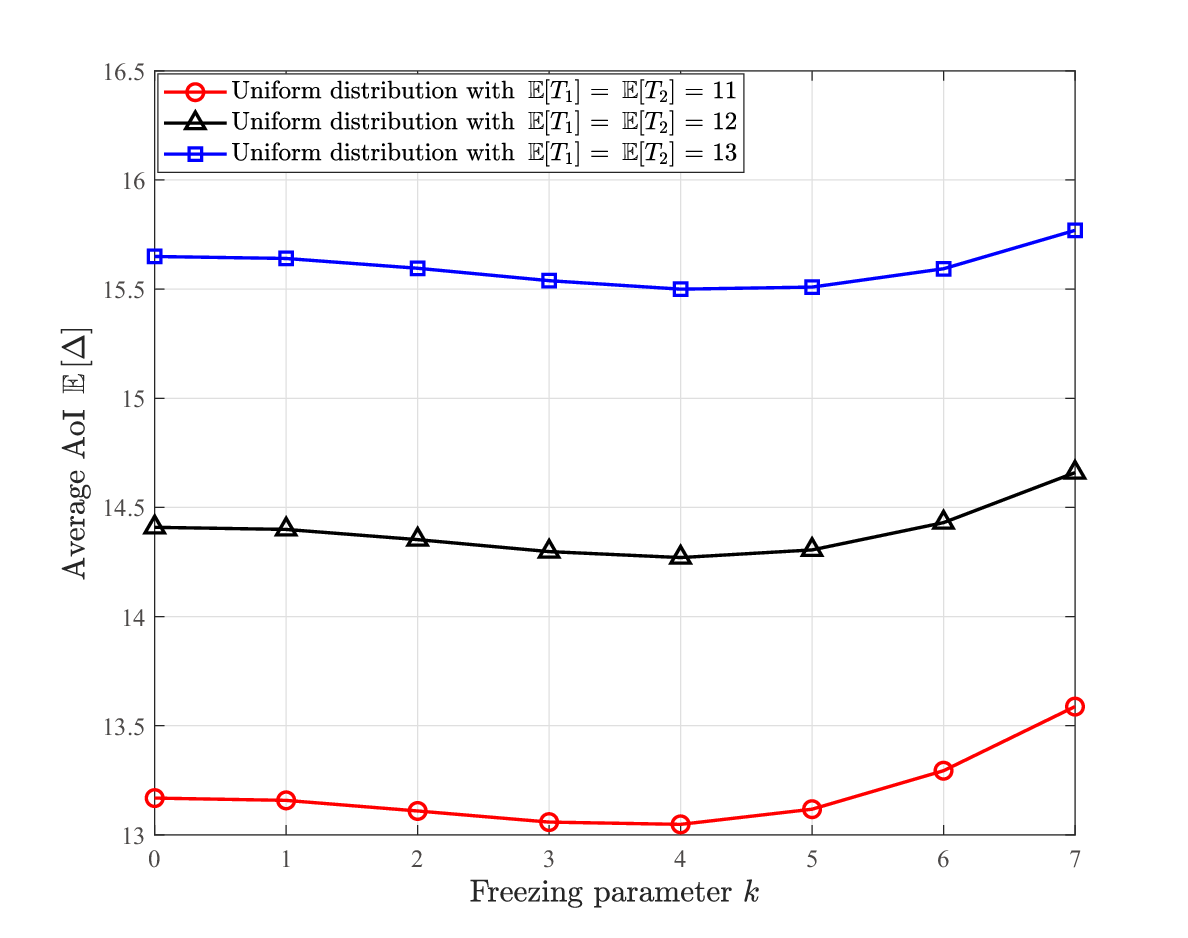}}
\subfloat[]{
\includegraphics[width = 0.337\textwidth]{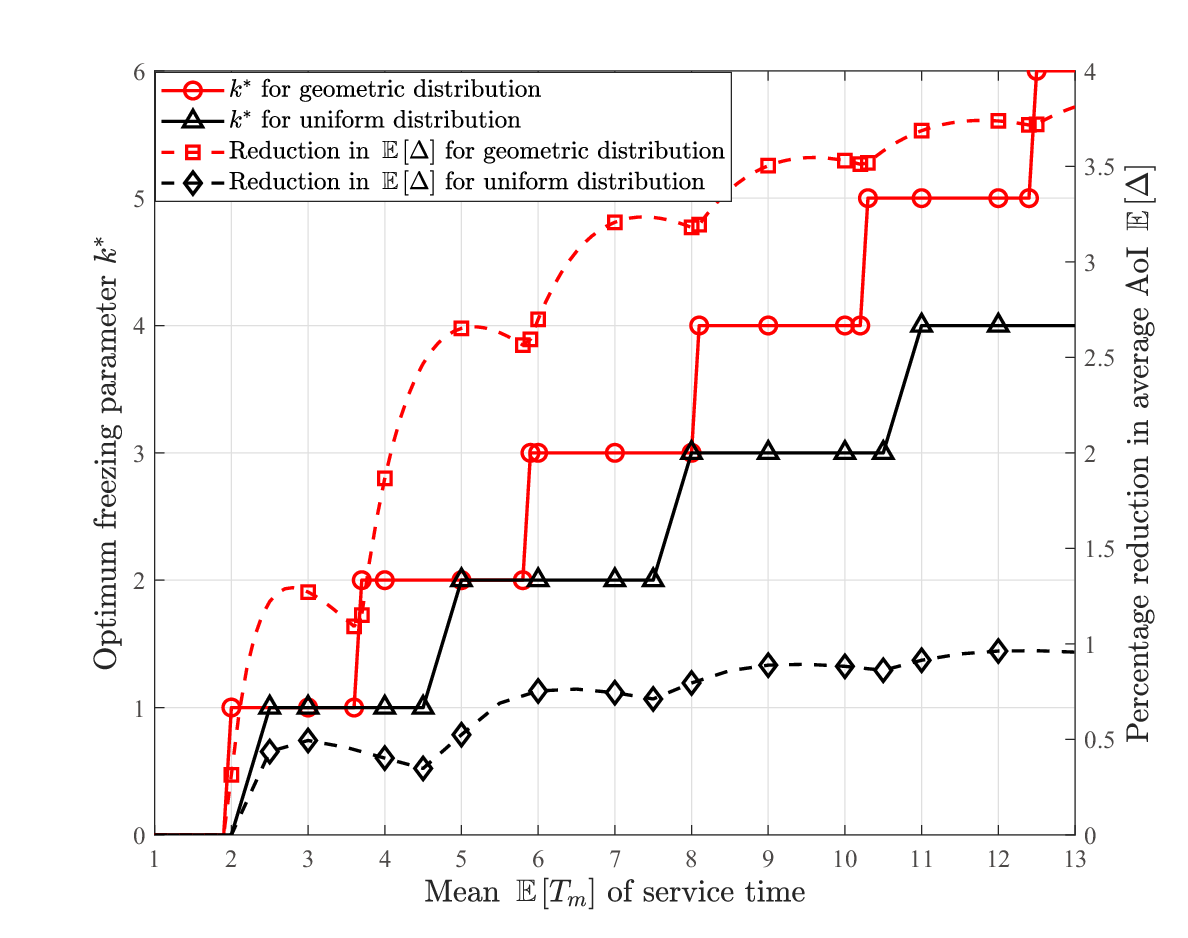}}
\caption{(a) Average AoI versus freezing parameter for geometric distribution. (b) Average AoI versus freezing parameter for uniform distribution. (c) Optimum freezing parameter and corresponding percentage reduction in average AoI versus mean service time. The service times $T_1$ and $T_2$ are identically distributed.}
\label{AverageAoI_k}
\end{figure*}

It has been recognized in Section \ref{OptimumFreezing} that the freezing parameter has an effect on the average AoI.
Next, we show the existence of the optimum freezing parameter and further discuss the relationship between the optimum freezing parameter and the statistical characteristics of service times.

Fig.~\ref{AverageAoI_k} presents the numerical examples using geometrically and uniformly distributed service times, for which we focus on the role of their mean.
From Fig.~\ref{AverageAoI_k}(a) and Fig.~\ref{AverageAoI_k}(b), it is observed that the curve of average AoI first drops and then rises with an increase of the freezing parameter $k$.
When the freezing parameter is small, both servers are likely to always remain busy.
Status update packets with proximate samples occupy the system's transmission resources, and the servers are unable to sample the source processes in a manner most favorable to timeliness.
This kind of timeliness penalty stems from proximate sampling and transmission.
For a given small freezing parameter, it is clear that the longer the mean service time, the greater the penalty of timeliness.
When the freezing parameter is quite large, it is likely that the source process will not be sampled in a timely manner even when a server is available. The timeliness of the system is also penalized at this point since the monitor would then receives status update packets at a lower frequency.
In Fig.~\ref{AverageAoI_k}(c), we depict the optimum freezing parameter $k^*$ versus the mean service time.
Using the optimum freezing parameter, the percentage reduction in average AoI compared to the no-freezing case (i.e., $k=0$) is also exhibited.
Note that in our model with freezing, if there is no free server once the freezing clock exceeds $k-1$ slots, sampling is not performed until a server becomes available.
On the other hand, we observe from Fig.~\ref{AverageAoI_k}(c) that the introduction of freezing does not always reduce the average AoI.
Specifically, the optimum freezing parameter is zero until $\mathbb E \left[T_m\right] \ge 2$ and $\mathbb E \left[T_m\right] \ge 2.5$ for geometrically and uniformly distributed service times, respectively.
This is reasonable because when the average service time is short enough, more frequent sampling and transmission can lead to a smaller average AoI, without considering the transmission cost.
As the average service time increases, the optimum freezing parameter increases.
One interesting finding is that for both distributions, the interval length of the average service time corresponding to each value of the optimal freezing parameter is basically fixed.
The interval is about $2.1$ for the geometric distribution and $2.5$ for the uniform distribution.
We also observe that within the interval, the gain from freezing first increases and then decreases slightly, which is due to the discrete-time nature of the system itself.
In addition, since the mean of a discrete uniform distribution can only be a multiple of $0.5$, the number of its available values within the same interval of mean is sparser than that for a geometric distribution.
Therefore, the rising edge of the $k^*$ curve corresponding to the uniform distribution is not very steep in Fig.~\ref{AverageAoI_k}(c).

\begin{figure}
\centering
\includegraphics[width=0.48\textwidth]{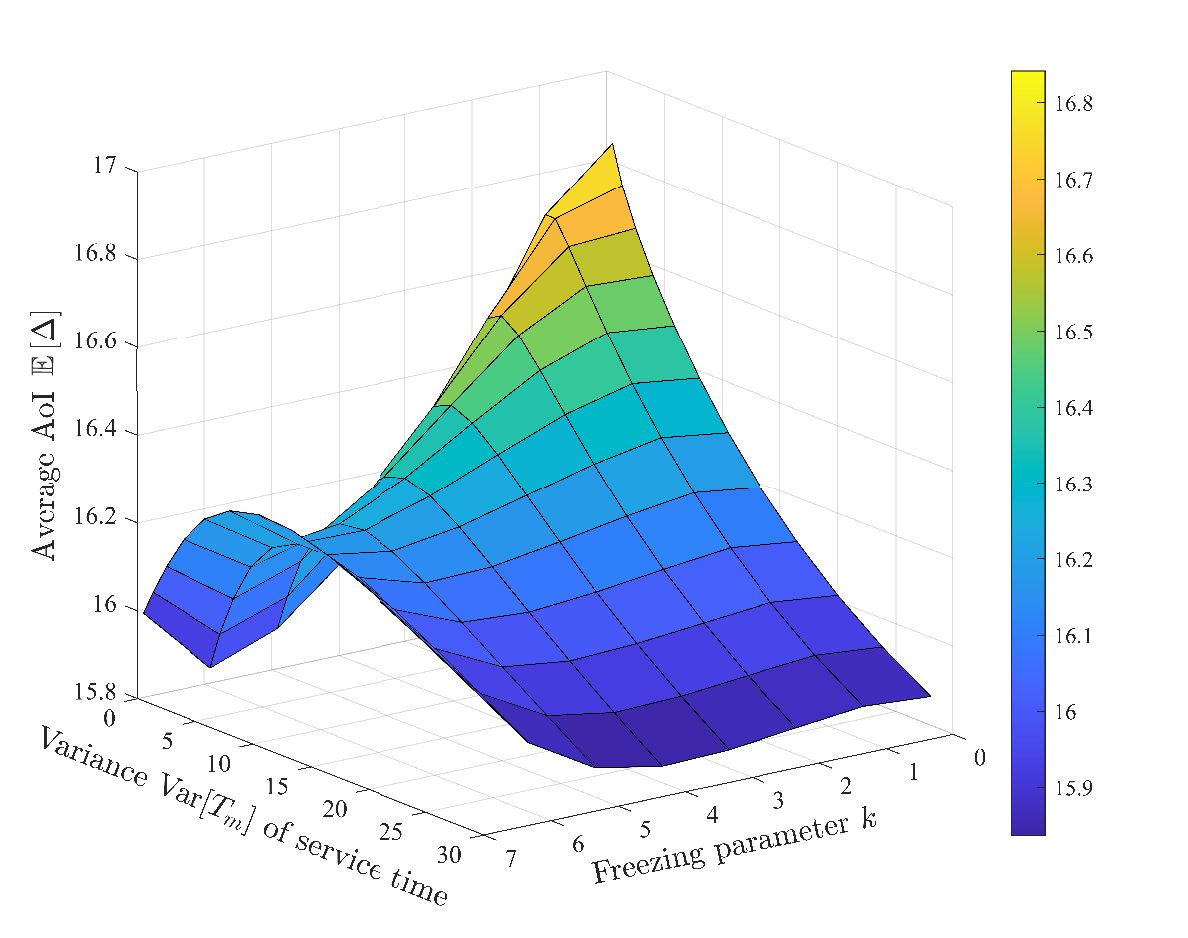}
\caption{Average AoI and the freezing parameter $k^*$ vs. the variance of service time for the triangular distribution. The service times $T_1$ and $T_2$ are identically distributed with $\mathbb E \left[T_m\right] = 13$, $m=1,2$.}
\label{AverageAoI_Var}
\end{figure}
Fig.~\ref{AverageAoI_Var} and Fig.~\ref{OptiFreezing_Var} demonstrate the impact of the variance of the service time on the average AoI and the optimum freezing parameter $k^*$, respectively.
We use triangularly distributed service times, the variance of which can be independently chosen from its mean.
It can be seen from Fig.~\ref{AverageAoI_Var} that for a given freezing parameter, the average AoI increases first and then decreases with increasing Var$[T_m]$. 
For the no-freezing case, a larger service time variance leads to a smaller average AoI.
On the other hand, we observe that with varying variance, the average AoI generally decreases first and then increases as the freezing parameter increases, again validating the benefit of freezing.
In particular, the use of freezing significantly reduces the average AoI when the variance is not very large, since the timeliness in the no-freezing case is quite poor at this point.
This fact is more clearly illustrated by Fig.~\ref{OptiFreezing_Var}, where the curves of percentage reduction in $\mathbb E\left[\Delta\right]$ decline overall (except for a negligible rise at the end) with the increase in $\text{Var}\left[T_m\right]$.
The maximum freezing gain, exceeding $17\%$, is obtained at Var$[T_m]=0$, where triangular distributions degenerate to deterministic distributions.
Specifically, the optimum freezing parameter for deterministically distributed service times is ${k^*} = {{\left\lfloor {\mathbb E\left[ {{T_m}} \right]} \right\rfloor } \mathord{\left/{\vphantom {{\left\lfloor {\mathbb E\left[ {{T_m}} \right]} \right\rfloor } 2}} \right.\kern-\nulldelimiterspace} 2}$, as it achieves the highest server utilization and the greatest diversity of packets across different servers simultaneously.
For $\text{Var}\left[T_m\right]>0$, the optimal freezing parameters decrease in a staircase pattern and can be obtained using the graph-based method.
In addition, we also observe that for a given Var$[T_m]$, the reduction in $\mathbb E\left[\Delta\right]$ increases when the mean service time increases, consistent with the cases utilizing geometrically and uniformly distributed service times.

\subsection{Examples with Non-Identical Distributed Service Times}\label{NonIden}

\begin{figure}
\centering
\includegraphics[width=0.47\textwidth]{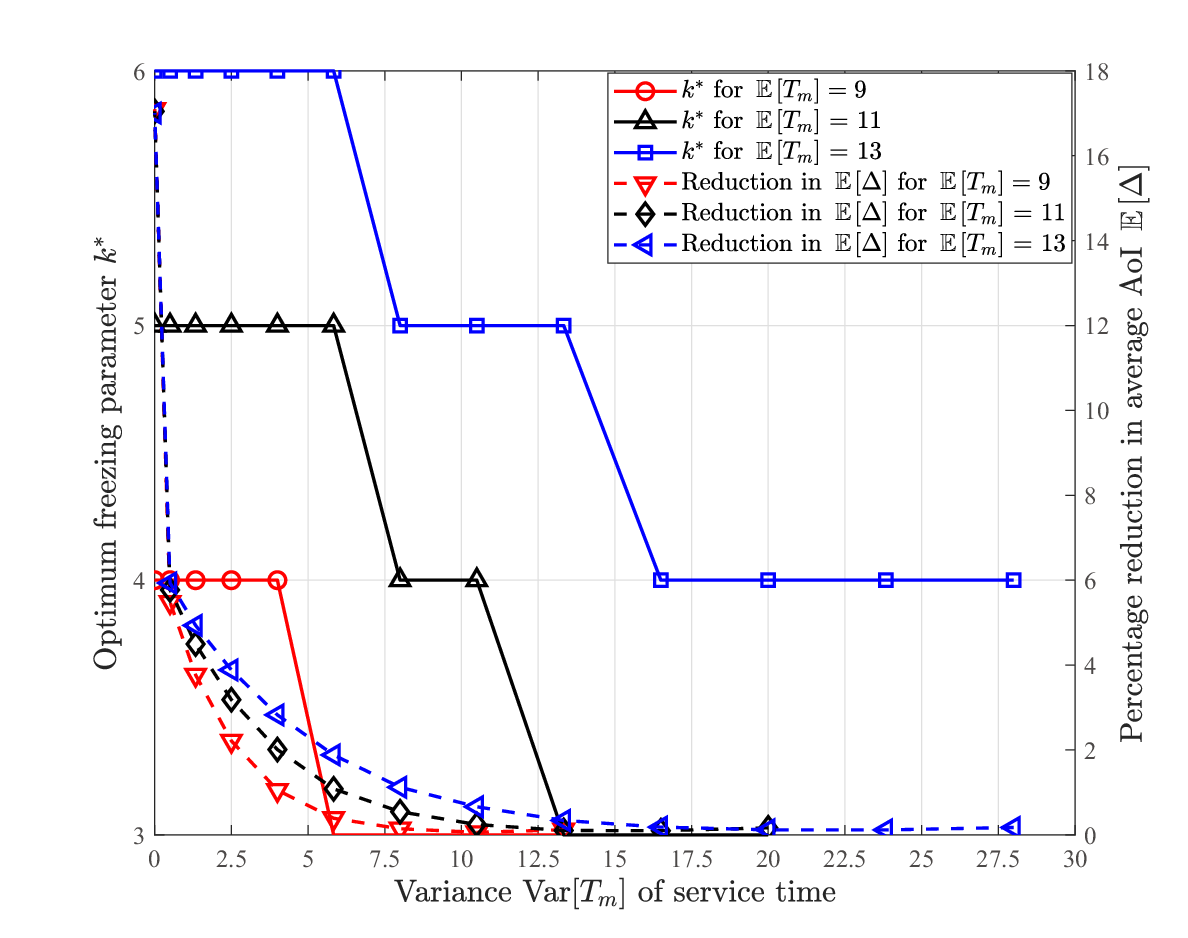}
\caption{Optimum freezing parameter and corresponding percentage reduction in average AoI versus variance of service time for triangular distribution. The service times $T_1$ and $T_2$ are identically distributed.}
\label{OptiFreezing_Var}
\end{figure}
All of the above numerical examples are obtained under the condition that the service times of $S_1$ and $S_2$ are identically distributed.
In Fig.~\ref{NonIden_Geo} and Fig.~\ref{NonIden_Tri}, we show the reduction in average AoI versus non-identical mean service times by using geometric distributions and triangular distributions, respectively.
Similar to Fig.~\ref{AverageAoI_k}(c) and Fig.~\ref{OptiFreezing_Var}, each point on the surface is obtained based on the corresponding optimal freezing parameter.
For geometrically distributed service times, the freezing gain shown in Fig.~\ref{NonIden_Geo} reaches $4.7\%$.
In addition, it can be seen that the effects of $\mathbb E \left[T_1\right]$ and $\mathbb E \left[T_2\right]$ on the reduction in $\mathbb E \left[\Delta\right]$ are distinguished.
Specifically, if $\mathbb E \left[T_1\right]$ is fixed, the reduction in $\mathbb E \left[\Delta\right]$ basically increases as $\mathbb E \left[T_2\right]$ increases.
If $\mathbb E \left[T_2\right]$ is fixed instead, the reduction in $\mathbb E \left[\Delta\right]$ increases first and then decreases with the increase in $\mathbb E \left[T_1\right]$.
The reason is that when both servers are idle, $S_1$ gives priority to serving a status update packet.
Using the policy of $S_2$ priority will change the shape of the surface.
For triangularly distributed service times, we fix the variances of the service times to $\text{Var}\left[T_1\right]$ = $\text{Var}\left[T_2\right] = 0.5$ in Fig.~\ref{NonIden_Tri}, while $\mathbb E \left[T_1\right]$ and $\mathbb E \left[T_2\right]$ are variable. 
It can be seen that the freezing gain reaches $6.17\%$.
Furthermore, the surface of the reduction in $\mathbb E \left[\Delta\right]$ presented in Fig.~\ref{NonIden_Tri} exhibits an interesting shape, indicating that the introduction of freezing only yields significant gains near the plane of $\mathbb E \left[T_1\right] = \mathbb E \left[T_2\right]$.
We observe that although the peak line satisfies $\mathbb E \left[T_1\right] = \mathbb E \left[T_2\right]$, which is different from Fig.~\ref{NonIden_Geo}, the surface is still asymmetrical due to the policy of $S_1$ priority.
Specifically, the surface is steeper on the side toward the $\mathbb E \left[T_1\right]$ axis than on the side toward the $\mathbb E \left[T_2\right]$ axis, indicating that the region with freezing gain is larger when $\mathbb E \left[T_1\right] < \mathbb E \left[T_2\right]$.
The asymmetry shown in Figs.~\ref{NonIden_Geo} and \ref{NonIden_Tri} suggest that, for cases with non-identical service times, conferring priority to the server with greater capacity can result in greater freeze gains, which is consistent with intuition.

\begin{figure}
\centering
\includegraphics[width=0.48\textwidth]{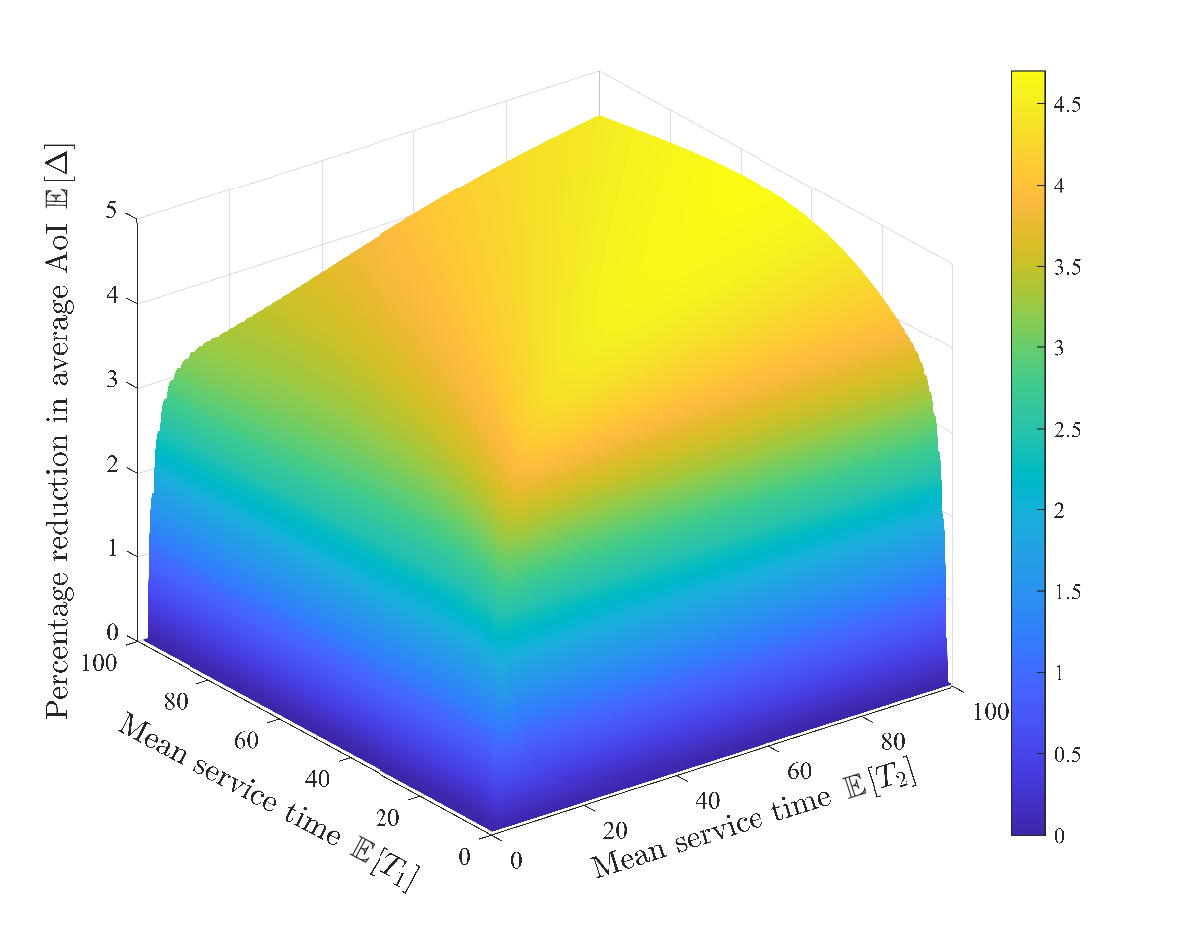}
\caption{Percentage reduction in average AoI versus mean service times of both servers for geometric distribution.}
\label{NonIden_Geo}
\end{figure}

\begin{figure}
\centering
\includegraphics[width=0.48\textwidth]{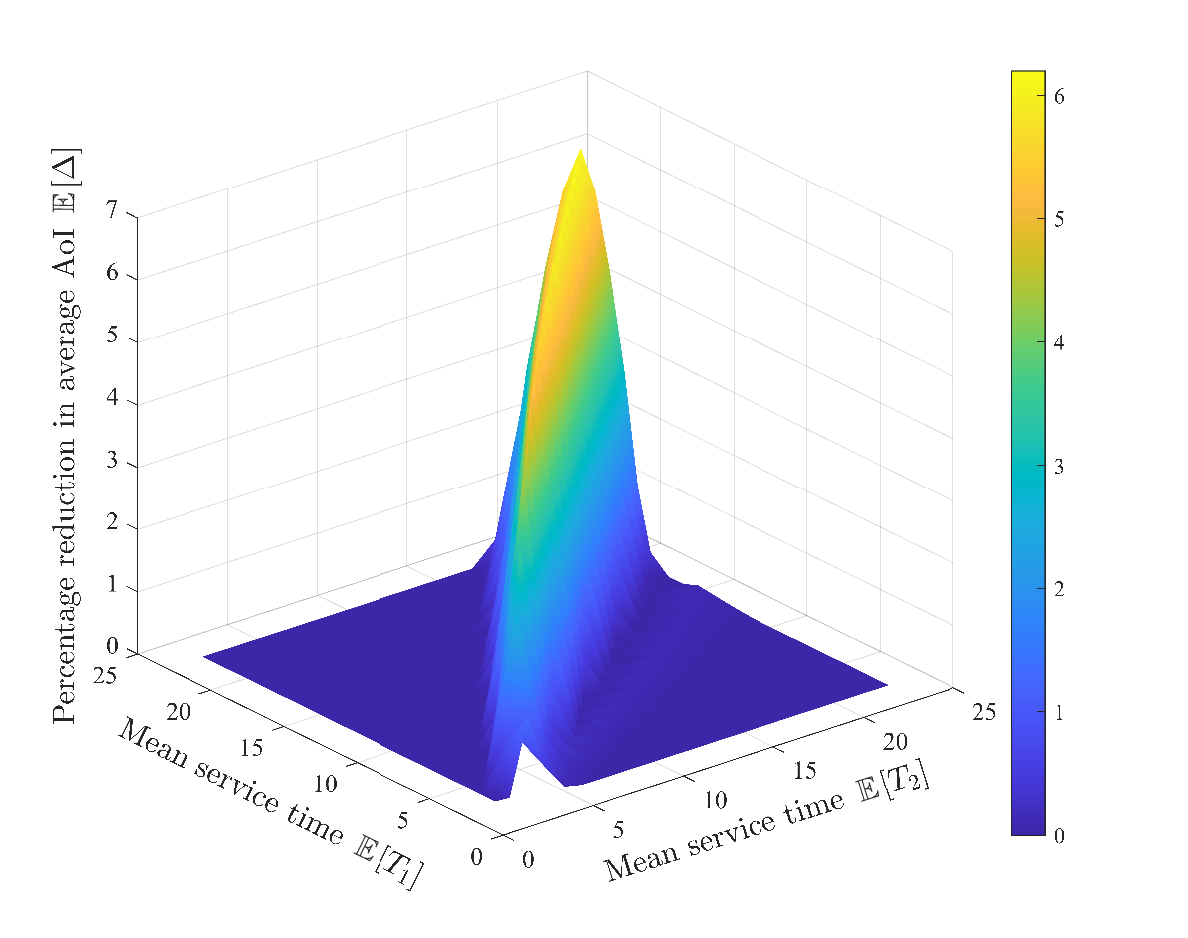}
\caption{Percentage reduction in average AoI versus mean service times of both servers for triangular distribution. $\text{Var}\left[T_1\right]$ = $\text{Var}\left[T_2\right] = 0.5$.}
\label{NonIden_Tri}
\end{figure}

\section{Conclusions}\label{Conclusion}
In this paper, we constructed a discrete-time AMC to model the distribution of steady-state AoI  of a DTDQ status update system considering transmission freezing.
DPH-type distributions were adopted to characterize the randomness of transmission times due to uncertain channel conditions, which can approximate arbitrary discrete distributions, enabling the universality of our model.
From the perspective of AMC, we further derived the exact distributions of both AoI and PAoI, thereby allowing us to easily obtain the mean of AoI and PAoI as well as their higher-order moments.
The validation of the constructed discrete-time AMC was conclusively demonstrated through several numerical examples using geometrically and uniformly distributed service times.
In addition, we investigated the role of freezing with various service time distributions.
The maximum freezing gain for geometric distributions appears when the mean service time of the priority server is shorter than the other server.
For triangular distributions, freezing is found to be more beneficial when the servers are identically distributed, and the gain reduces when the asymmetry increases.
In particular, using identical service times, freezing is especially beneficial for larger mean service times and smaller variances.
In this case, we have observed up to $6.17\%$ gains with freezing. 
Larger gains would have been possible but the model does not allow us to go to very high mean service times.
Future work can extend to more flexible server priority scenarios, e.g., depending on a random parameter, and investigation of the effect of this priority parameter on the timeliness and freezing gain for the DTDQ system. 
Study of DTDQ systems with random arrivals is also another interesting research direction.

\bibliographystyle{IEEEtran}
\bibliography{AoIDTDQ}

\end{document}